\declaretheorem[numberwithin=section,name=Theorem]{theorem}
\declaretheorem[name=Proposition, sibling=theorem]{proposition}
\declaretheorem[name=Corollary, sibling=theorem]{corollary}
\declaretheorem[name=Lemma, sibling=theorem]{lemma}
\declaretheorem[name=Definition, sibling=theorem, style=definition]{definition}
\declaretheorem[name=Remark]{remark}
\DeclarePairedDelimiter\rbra{\lparen}{\rparen}
\DeclarePairedDelimiter\sbra{\lbrack}{\rbrack}
\DeclarePairedDelimiter\cbra{\{}{\}}
\DeclarePairedDelimiter\abs{\lvert}{\rvert}
\DeclarePairedDelimiter\norm{\lVert}{\rVert}
\DeclarePairedDelimiter\ave{\langle}{\rangle}
\DeclareMathOperator*{\E}{\mathbb{E}}
\newcommand{\dd}[1]{\mathop{}\!\mathrm{d}#1}
\newcommand{\Poisson}{\mathrm{Poisson}}
\newcommand{\e}{\mathrm{e}}
\newcommand{\ii}{\mathrm{i}}
\newcommand{\vecop}{\operatorname{vec}}
\newcommand{\tr}{\operatorname{tr}}
\newcommand{\TV}{\operatorname{TV}}
\newcommand{\SampCom}{\mathsf{S}}
\newcommand{\Dis}{\mathsf{DIS}}
\newcommand{\Fid}{\operatorname{F}}
\newcommand{\QueryD}{\mathsf{Q}_{\diamond}}
\newcommand{\StableS}[2]{\mathsf{Stable}_{#1}\rbra{#2}}
\begin{document}

\preprint{APS/123-QED}

\title{L\'{e}vy-Khintchine Structure Enables Fast-Forwardable Lindbladian Simulation}

\author{Minbo Gao}%
\email{gaomb@ios.ac.cn}%
\affiliation{Key Laboratory of System Software (Chinese Academy of Sciences),
  Institute of Software, Chinese Academy of Sciences, China}%
\affiliation{University of Chinese Academy of Sciences}

\author{Zhengfeng Ji}%
\email{jizhengfeng@tsinghua.edu.cn}%
\affiliation{ Department of Computer Science and Technology, Tsinghua
  University, Beijing, China }%

\author{Chenghua Liu}%
\email{liuch@ios.ac.cn}%
\affiliation{Key Laboratory of System Software (Chinese Academy of Sciences),
  Institute of Software, Chinese Academy of Sciences, China}%
\affiliation{University of Chinese Academy of Sciences}

\begin{abstract}
  Simulation of open quantum systems is an area of active research in quantum
  algorithms.
  In this work, we revisit the connection between Markovian open-system dynamics
  and averages of Hamiltonian real-time evolutions, which we refer to as
  Hamiltonian twirling channels.
  By applying the L\'evy-Khintchine representation theorem, we
  clarify when and how a dissipative dynamics can be realized using Hamiltonian
  twirling channels.
  Guided by the general theory, we explore Hamiltonian twirling with
  Gaussian, compound Poisson and symmetric stable distributions and their algorithmic implications.
  These give wide classes of Lindbladians that
  can be simulated in $\Theta(t^{1/\alpha})$ Hamiltonian simulation time \emph{without any extra ancilla or other quantum gates} for $1\le \alpha \le 2$. Moreover, 
  we prove that these time complexities are asymptotically \emph{optimal} using an information theoretic approach, which, to the best of our knowledge, is the \emph{first} result of lower bounds on fast-forwarding simulation algorithms.
\end{abstract}

\maketitle

\section{Introduction}%
\label{sec:introduction}

The dynamics of Markovian open quantum systems are most generally described by
Lindblad master equations~\cite{GKS76,Lin76}, which provide a unified framework
for dissipation, decoherence, and measurement-induced effects arising from
system-environment coupling.
An important family of such open-system dynamics is characterized by L\'{e}vy
processes, which generalize Brownian motions and arise naturally in a variety of
physically relevant settings.
Examples include irreversible quantum evolutions due to interactions with
classical Gaussian and Poisson noise~\cite{Hol01}, decoherence governed by
L\'{e}vy stable laws arising from coupling to chaotic subsystems~\cite{KBD99},
and scattering-induced dynamics of center-of-mass degrees of
freedom~\cite{Vac05}, where the interplay between L\'{e}vy processes and
decoherence has been extensively studied.
As these dynamics arise from convolution semigroups of L\'{e}vy distributions,
we refer to them as \emph{(quantum) L\'{e}vy dynamics}.
Remarkably, the corresponding generators fall within the class of covariant
Lindbladians studied in~\cite{Hol95, Hol95a, Hol96, Hol01}, which admit a
quantum L\'{e}vy--Khintchine representation consisting of Hamiltonians, Gaussian
(diffusive) terms, and Poisson (jump) terms derived via quantum stochastic
calculus~\cite{Kho91}.

Given the physical importance of L\'{e}vy dynamics and the rapid development of
quantum computing, it is natural to ask whether such open-system dynamics can be
efficiently simulated on a quantum computer.
General Lindbladian simulation algorithms have been extensively studied in
recent years.
The first such algorithm~\cite{KBG+11} achieves complexity
$O\rbra{t^2/\varepsilon}$ for evolution time $t$ and error $\varepsilon$, which
was later improved to $O\rbra{t^{1.5}/\sqrt{\varepsilon}}$ in~\cite{CL17}.
More recently, algorithms with complexity
$O\rbra{t\operatorname{polylog}\rbra{t/\varepsilon}}$ were proposed using the
linear-combination-of-unitaries framework~\cite{CW17,LW23}.
Additional approaches to Lindbladian simulation, encompassing different models
and techniques, have been explored in~\cite{SHS+21,KSMM22,
  SHS+22,PW23a,PW23,DCL24,DLL24,DVF+24,GKP+25,BM25,CKBG25,PSZZ25,PSW25}.
These simulation algorithms are already nearly optimal as suggested by a
no-fast-forwarding theorem in~\cite{CL17}.
It nevertheless remains an intriguing question to ask whether the special
structure of L\'{e}vy dynamics enables faster simulation algorithms, and whether
tools from the theory of L\'{e}vy dynamics can yield new insights into
Lindbladian simulation.

In this Letter, we answer these questions in the affirmative by introducing the
L\'{e}vy-Khintchine representation theorem and the technique of Hamiltonian
twirling~\cite{Kho91,Vac05,AKMV10} into the study of quantum algorithms for
open-system simulation.
Using the L\'{e}vy-Khintchine representation, we present the precise conditions
under which a classical random source interacting with a quantum system induces
Lindblad dynamics, namely L\'{e}vy dynamics.
We further prove that L\'{e}vy dynamics can be simulated in a
\emph{time-optimal} and \emph{ancilla-free} way.
Interestingly, this includes the recently proposed fast-forwarding
$\widetilde{O}(\sqrt{t})$ algorithm for Lindbladians with a single Hermitian
jump operator~\cite{SGR+25} as a special case using \emph{Gaussian twirling} of
Hamiltonian evolutions.
Furthermore, by purifying the Gaussian twirling channel, we show that the
Lindbladian evolution acts as a controlled phase shift gate on a
continuous-variable ancillary system, therefore implying a continuous-variable
quantum phase estimation algorithm (CV-QPE).
More generally, we establish the existence of Lindbladians whose simulation have
exactly the complexity $\Theta(t^{1/\alpha})$ for every $\alpha\in\interval{1}{2}$,
thereby substantially expanding the known landscape of fast-forwardable
Lindbladian dynamics.
The optimality of these algorithms is supported by matching lower bounds
obtained via a lifting from information-theoretic lower bounds.
Our main results are summarized more formally below.

\begin{table*}[t]
\caption{Infinitely divisible distributions, Lindbladian forms of their corresponding Hamiltonian twirling channels, and Hamiltonian simulation time complexity. \emph{All these Lindbladians can be simulated without ancilla.}}
\label{tab:Hamiltonian-twirling-collection-results}
\centering
\begin{ruledtabular}
\begin{tabular}{@{}lcc@{}}
Distribution $\mathcal \mu_t$ & Lindbladian $\mathcal L(\rho)$ & Time Complexity \\
\midrule
Dirac delta $\delta_{at}$ &
$-\mathrm{i} a[H,\rho]$ &
$\Theta(t)$ \\
\midrule
Gaussian $\mathcal N(0,\sigma^2 t)$ &
$-\tfrac12\sigma^2[H,[H,\rho]]$ &
$\Theta(\sqrt{t})$ \\
\midrule
\begin{tabular}[c]{@{}l@{}}
Symmetric $\alpha$-stable, \\
$\StableS{\alpha}{ct}$ ($1<\alpha\le 2$) 
\end{tabular} &
$\propto \int_{\abs{s}>0} \dd{s} \frac{1}{\abs{s}^{1+\alpha}} \rbra*{\mathrm{e}^{-\mathrm{i}Hs}\rho \mathrm{e}^{\mathrm{i}Hs} - \rho}$ &
$\Theta(t^{1/\alpha})$ \\
\midrule
\begin{tabular}[c]{@{}l@{}}
Compound Poisson, \\
with rate $t$ and discrete jump law $\mu$
\end{tabular} &
$\E_{s\sim \mu}\!\big[\mathrm{e}^{\mathrm{i}Hs}\rho\mathrm{e}^{-\mathrm{i}Hs}\big]-\rho$ &
$\Theta(t)$ \\
\end{tabular}
\end{ruledtabular}
\end{table*}

\noindent
\textit{Simulation via Hamiltonian Twirling.---\/} For a Hamiltonian $H$ and a
probability distribution $\mu$ on $\mathbb{R}$, we examine the
properties of the channel
$\E_{s\sim \mu} \sbra{\e^{\ii H s} \rho \e^{-\ii H s}}$, which we refer
to as a Hamiltonian twirling channel.
This channel is a classical mixture of Hamiltonian evolutions and has been
extensively studied in quantum stochastic calculus (see, e.g.,~\cite{Kho91}) and
in the context of Brownian motions on a Lie group~\cite{AKMV10}.
When $\mu$ is infinitely divisible\footnote{A probability distribution
  is infinitely divisible, if for any $n$, it can be written as the probability
  distribution of the sum of $n$ copies of independent and identically
  distributed random variables. For example, Gaussian, Cauchy and Poisson are infinitely divisible. See~\cite{Sat13} and~\cite{sm25} I.D. for more details.}, 
  the family
${\{ \Phi_{H,\mu_t} \}}_{t\ge 0}$ forms a CPTP semigroup with a
Lindbladian generator that commutes with~$H$.
Conversely, every random unitary Lindbladian dynamic can be represented in the
Hamiltonian twirling form.
Furthermore, the explicit form of the Lindbladian operator can be derived using
the L\'{e}vy-Khintchine representation theorem for infinitely divisible
distributions.
Although the theory of L\'{e}vy dynamics is well-established in physics and
mathematics, its application to quantum algorithms for Lindbladian simulation
represents a new direction.

\begin{figure}
  \centering
  \includegraphics[width=\textwidth]{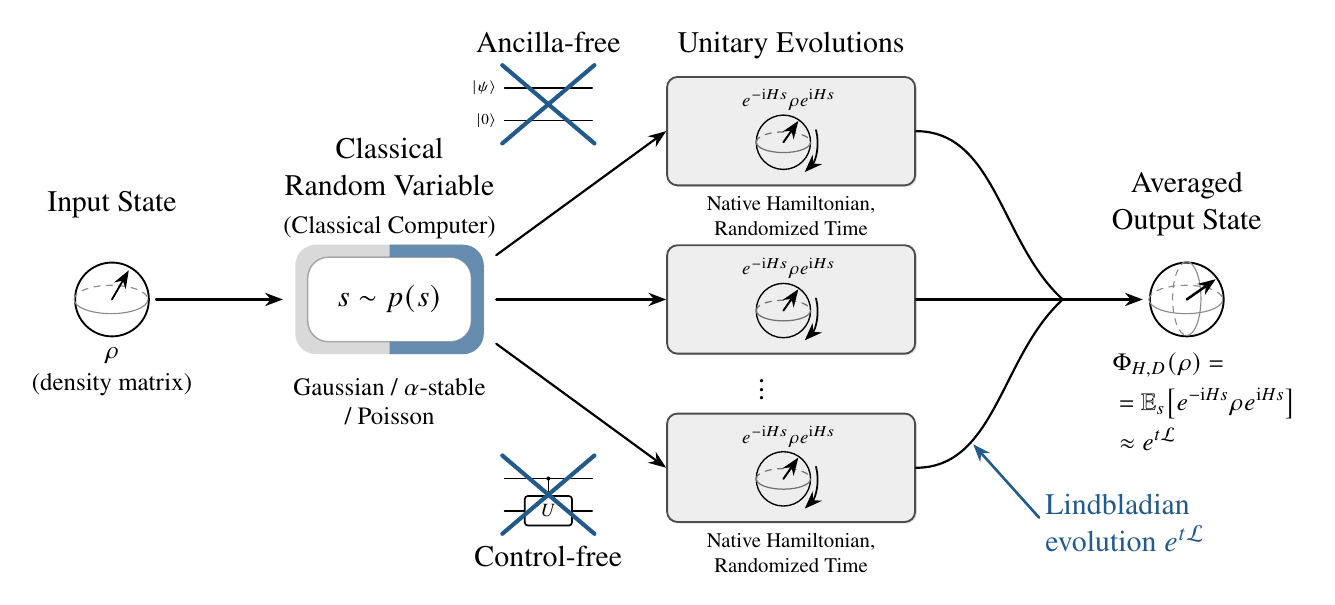}
  \caption{Hamiltonian twirling for Lindbladian simulation.}\label{fig:twirling}
\end{figure}

\noindent
\textit{Ancilla-free Simulation for L\'evy Dynamics.---\/} Based on the exact form
of the Lindbladian corresponding to Hamiltonian twirling channels, we study
several important cases where the underlying distributions being Gaussian,
compound Poisson and $\alpha$-stable, explicitly analyzing the time complexity
of simulating the corresponding dynamics.
For Gaussian and $\alpha$-stable distributions, their induced Hamiltonian
twirling channels result in $O(\sqrt{t})$ and $O(t^{1/\alpha})$ fast-forwarding
Lindbladian simulation algorithms.
Specifically, a Gaussian distribution induces a Lindbladian with a single
Hermitian jump operator $H$, which is also considered in~\cite{SGR+25}.
Achieving similar $O(\sqrt{t})$ simulation time, our Gaussian twirling algorithm
is experimentally much more friendly than the approach in~\cite{SGR+25}, as ours
avoids the use of extra ancillas, Dicke state preparation, and control access to
a dilated Hamiltonian.

\noindent
\textit{Matching Lower Bounds.---\/} To complement the results on the simulation
algorithms obtained via Hamiltonian twirling, we establish matching lower bounds
for the query complexity of the jump operators.
This demonstrates that our algorithms are not only space-optimal, as they are
ancilla-free, but also time-optimal.

Notably, our proof for these lower bounds builds on the sample-to-query lifting
technique~\cite{WZ25,CWZ25,TWZ25} which could help us lift information-theoretic
sample lower bounds to the query complexity of quantum algorithms.
The core intuition is that, through the Schur-channel representation of the
Hamiltonian twirling channel, the task of distinguishing two carefully chosen
diagonal states can be reduced to simulating a Lindbladian whose single
Hermitian jump operator is precisely that state.
This reduction allows us to transfer sample-complexity lower bounds for
state discrimination directly to query lower bounds for Lindbladian simulations.

\section{Revisiting Hamiltonian twirling channels}

Hamiltonian twirling channels arise naturally in several earlier contexts.
In particular, related constructions appear in the theory of quantum stochastic
calculus~\cite{Kho91} and in the study of Brownian motions on Lie
groups~\cite{AKMV10}.
Here we revisit these channels from an algorithmic perspective.
For a Hamiltonian $H$ and a probability distribution $\mu$ on
$\mathbb{R}$, we call the following channel the Hamiltonian twirling channel of
$H$ with respect to $\mu$.
\begin{equation}
  \Phi_{H,\mu}\rbra{\rho}
  = \E_{s\sim \mu} \sbra{\e^{\ii H s} \rho  \e^{-\ii H s}}.
\end{equation}

In this representation, Hamiltonian twirling channels admit efficient
implementation on quantum devices, as they require only Hamiltonian evolutions.
This motivates a closer study of their properties.

Expanding $H$ and $\rho$ in $H$'s eigenbasis as
$H = \sum_j \lambda_j \ket{j}\bra{j}$,
$\rho = \sum_{jk}\rho_{jk}\ket{j}\bra{k}$, we compute the effect of
$\Phi_{H, \mu}$ on $\rho$ and show it acts as a Schur channel of the
form
\begin{equation}\label{eqn:Schur-form-Hamiltonian-twirling}
  \Phi_{H,\mu} \rbra*{\rho} = \sum_{jk}
  \hat{\mu}\rbra*{\lambda_j-\lambda_k} \rho_{jk} \ket{j}\bra{k},
\end{equation}
where
$\hat{\mu}(z) = \E_{x\sim \mu}[\mathrm{e}^{\mathrm{i}zx}]$ is
the characteristic function of the distribution $\mu$.
  
We are particularly interested in a family of these channels
$\{\Phi_{H, \mu_t}\}$ induced by a family of distributions
$\{\mu_t\}$.
A natural question is what conditions imply $\{\Phi_{H, \mu_t}\}$ being
Markovian, i.e., of the form $\mathrm{e}^{t \mathcal{L}}$ of a Lindbladian
$\mathcal{L}$.
\Cref{eqn:Schur-form-Hamiltonian-twirling} implies that the characteristic
functions should be multiplicative, i.e.,
$\hat{\mu}_{s+t} = \hat{\mu}_{s} \hat{\mu}_{t}$, which
we prove is the exact condition.
This means $\mu_t$ is infinitely divisible for every $t$, which admits a
L\'{e}vy-Khintchine representation of its characteristic function as
\begin{equation*}%
  \begin{aligned}
    \hat{\mu}\rbra*{z}
    = \exp \Big[& -\frac{1}{2}\sigma_t^2 z^2 + \ii\gamma_t z \\
                & + \int \rbra*{ \e^{\ii z\cdot x} -1 - \ii
                  z\cdot x 1_{\abs{x}\le 1}\rbra*{x} }\nu_t\rbra*{\dd{x}} \Big],
  \end{aligned}
\end{equation*}
where $\sigma_t\ge 0$, $\nu_t$ is a L\'{e}vy measure, and $\gamma_t\in \mathbb{R}$.
This corresponds to general results in quantum stochastic calculus~\cite{Kho91}
and Brownian motions on Lie groups~\cite{AKMV10}.
  
Moreover, using the L\'{e}vy-Khintchine representation, we explicitly compute
the corresponding Lindblad operator $\mathcal{L}$, which has effective
Hamiltonian term
$H_{\text{eff}} = \rbra*{\int_{-1}^{1} s \; \nu \rbra*{\dd{s}}-\gamma} H$, and
jump operators $L_0 = \sigma H$,
$L_{s} = \e^{\ii H s} \sqrt{\frac{\dd{\nu}}{\dd{t}}\rbra{s}}$ for
$s\in \mathbb{R}$.

\section{Ancilla-free Lindbladian Simulation for L\'evy Dynamics}

\noindent
\textit{Gaussian Twirling.---\/} The L\'{e}vy representation of a Gaussian
distribution $\mathcal{\mu}_{t}=\mathcal{N}(0,t)$ has a particular simple form,
namely $\hat{\mu}_{t}(z) = \exp\rbra*{-\frac{t z^2}{2}}$.
Then, by the above computation, we know the corresponding Lindbladian has only
one single jump operator $\sqrt{t}H$.
To put another way, the Hamiltonian twirling channels induced by (mean-zero)
Gaussian distributions efficiently implements time $t$ evolution of the
Lindbladian $\mathcal{L}(\rho)=H\rho H-\tfrac12\{\rho,H^2\}$.

When the simulation time needs to be upper bounded, one can
use a truncated Gaussian distribution instead. Utilizing a concentration bound
for Gaussian distributions, one can show that at most $\sqrt{2t \log\rbra*{\frac{4}{\varepsilon}}}$ time evolution suffices to implement $\mathrm{e}^{t\mathcal{L}}$ within diamond norm error $\varepsilon$.

\begin{algorithm}[htb]
\caption{Gaussian Distribution Twirling with a Cutoff}%
\label{alg:gaussian-twirl-cutoff}
\begin{algorithmic}[1]
\Require{} Time parameter $t > 0$, accuracy $\varepsilon \in (0,1)$, and Hamiltonian $H$.
\Ensure{} The algorithm implements a channel close to $\e^{t \mathcal{L}}$ to within diamond distance $\varepsilon$.
\State{} Set a cutoff $S \coloneqq \sqrt{2t \log\rbra*{\frac{4}{\varepsilon}}}$.
\State{} Sample $s$ from the truncated normal distribution $q_{t,S}$ with density
\begin{equation*}
q_{t,S}(s) \propto
\begin{cases}
\e^{-s^2/2t}, & \text{if } s \in \interval{-S}{S}, \\
0, & \text{otherwise}.
\end{cases}
\end{equation*}
\State{} Apply the unitary $U_s \coloneqq \e^{-\ii Hs}$ to the system.
\State{} Discard the sampled value $s$.
\end{algorithmic}
\end{algorithm}

Therefore, we arrived at the following conclusion: Simulating
$\mathcal{L} = -\frac{1}{2}\sbra{H, \sbra{H, \cdot}}$ for evolution time $t$ to
within diamond distance error $\varepsilon$ can be done by
Algorithm~\ref{alg:gaussian-twirl-cutoff} using only
$O(\sqrt{t\log (1/\varepsilon)})$ time Hamiltonian simulation of $H$
\emph{without any extra ancilla or other quantum gates}.
This result can be easily generalized to the case where Lindbladians have
commuting Hermitian jump operators by sequential simulation, matching the
results in~\cite{SGR+25} while avoiding using extra ancillas or control.

\noindent
\textit{CV Quantum Phase Estimation.---\/} Following the approach
of~\cite{SGR+25}, our Lindbladian simulation algorithm also gives rise to a
continuous-variable quantum phase estimation.
Consider a Hamiltonian $H$ with its eigendecomposition
$H = \sum_j \lambda_j \ket{\psi_j}\bra{\psi_j}$ and the initial pure state
$\ket{\psi_0}$.
Let $p_t(s)$ denote the probability density function of some (infinitely
divisible) distribution $\mu_t$ with characteristic function $\hat{\mu}(z)$.
Then, the purified version of our Hamiltonian twirling channel works as follows:

First, the algorithm starts with the initial state
$\ket{\phi_0}_{AB} = \int_{\mathbb{R}} \sqrt{p_t(s)}\ket{s}_A\dd{s} \otimes \ket{\psi_0}_B$,
and then applies the controlled Hamiltonian evolution
$U_{H}\coloneqq \int_{\mathbb{R}} \ket{s}_A\bra{s} \otimes \mathrm{e}^{-\ii H s} $
on $\ket{\phi_0}_{AB}$.
One could obtain the state after the corresponding Lindbladian evolution
$\mathrm{e}^{t \mathcal{L}}$ by tracing out the system $A$.

Now, we consider how to extract information about eigenvalues of $H$ from the
purified version $U_H \ket{\phi_0}_{AB}$.
Noting that
\begin{equation*}
  \begin{aligned}
    U_H \ket{\phi_{0}}_{AB}
    & = \int_{\mathbb{R}} \sqrt{p_t(s)} \ket{s}_A \otimes
      \mathrm{e}^{-\ii Hs} \ket{\psi_0} \dd{s} \\
    &= \int_{\mathbb{R}} \sqrt{p_t(s)} \mathrm{e}^{-\ii \lambda_0 s}
      \ket{s}_A \dd{s} \otimes \ket{\psi_0},
  \end{aligned}
\end{equation*}
we know the final state is
$\ket{\widetilde{\phi}_{\lambda_0}} \otimes \ket{\psi_0}$ where
$\ket{\widetilde{\phi}_{\lambda_0}} = \int_{\mathbb{R}} \sqrt{p_t(s)} \mathrm{e}^{-\ii \lambda_0 s} \ket{s}_A \dd{s}$.

To summarize, \emph{the purified Lindbladian dynamics procedure acts as a control phase-shift gate} on the ancilla system with the initial state $\int_{\mathbb{R}} \sqrt{p_t(s)} \ket{s}_A \dd{s}$.

In the case of Gaussian distributions, one can apply the idea of conjugate
observable (see, for instance,~\cite{WPG+12}) as follows.
Let $\hat{s} = \int_{\mathbb{R}} s\ket{s}\bra{s} \dd{s}$, and $\hat{k}$ denote
an observable with $\sbra{\hat{s}, \hat{k}} = \mathrm{i}$, such that the
eigenvectors $\ket{k}$ of $\hat{k}$ satisfies
$\braket{s}{k} = \frac{1}{\sqrt{2\pi}} \mathrm{e}^{\mathrm{i}sk}$.
Expanding $\ket{\widetilde{\phi}_{\lambda_0}}$ in basis $\ket{k}$, we know
$\ket{\widetilde{\phi}_{\lambda_0}} = \int_{\mathbb{R}} \rbra*{\frac{2t}{\pi}}^{1/4}
\exp\rbra*{-t(k+\lambda_{0})^{2}} \ket{k} \dd{k}$.
Therefore, the measure result in basis $\ket{k}$ follows the distribution
$\mathcal{N}(-\lambda_0, \frac{1}{4t})$, and one can estimate the mean value of
this normal distribution to compute $\lambda_0$.
Therefore, by adjusting $t$, we obtain an $O\rbra{\varepsilon^{-1}}$-time CV-QPE
algorithm giving an estimate of error less than $\varepsilon$ with probability at least $2/3$.

Viewed in this way, the connection between quantum phase estimation and~\cite{SGR+25}'s Lindbladian simulation method could be seen as a discrete approximation of the above physical picture.

\noindent
\textit{Stable Distributions Twirling.---\/} Next, we turn our attention to
stable distributions, which are natural generalizations of Gaussian
distribution.
A distribution is stable if a linear combination of two independent random
variables with this distribution follows the same distribution up to location
and scale parameter.
For a symmetric $\alpha$-stable distribution $\StableS{\alpha}{c}$ (where
$1< \alpha \le 2$), its characteristic function can be written as
$\hat{\mu}_c(z) = \exp \rbra{-c\abs{z}^{\alpha}}$.
We refer readers to Supplementary Materials~\cite{sm25} for more details.

By implementing the Hamiltonian twirling channels
$\Phi_{H,\StableS{\alpha}{ct}}$, we can simulate the dynamics
\begin{equation}
  \mathrm{e}^{t\mathcal{L}}[\rho] = \sum_{jk} \exp\rbra{-t
    c \abs{\lambda_j-\lambda_k}^{\alpha} } \rho_{jk} \ket{j}\bra{k}
\end{equation}
with Hamiltonian simulation time $O(t^{1/\alpha})$ (see~\cite{sm25} for more
details).

\noindent
\textit{Poisson and Compound Poisson Twirling.---\/} As our last example, we
consider the case where the distribution is Poisson.
For $\mu_t = \Poisson\rbra{t}$, we know its characteristic function is just
$\hat{\mu}_t(z) = \exp\rbra{t\rbra{\exp\rbra{\ii z}-1}}$.
This yields an ancilla-free simulation algorithm with $O(t)$ cost, 
obtained via our Hamiltonian twirling approach, for the Lindbladian
$
  \mathcal{L} \rbra{\rho} = \e^{\ii H} \rho \e^{-\ii H} - \rho$,
that is, a Lindbladian generated by a single unitary jump operator $\e^{-\ii H}$.

By replacing Poisson distribution with general compound Poisson distributions,
the induced Hamiltonian twirling channel simulates the following Lindbladian
dynamics
$\mathcal{L} \rbra{\rho} = \sum_j p_j \e^{\ii H s_j} \rho \e^{-\ii H s_j} - \rho$,
or, a Lindbladian with mixed unitary jump operators.

\noindent
\textit{Summary.---\/} Collecting the above results
in~\Cref{tab:Hamiltonian-twirling-collection-results}, we have proven the
following theorem.

\begin{theorem}\label{thm:algorithm-Lindbladian-complexity-general} 
  For every $1 < \alpha \le 2$, there exist classes of Lindbladian dynamics that admit \emph{fast-forwarded simulation}, in the sense that evolution for time $t$ can be implemented using only $O(t^{1/\alpha})$ Hamiltonian simulation time, without any ancillary systems or additional quantum gates. 
\end{theorem}

\section{Matching Lower Bounds by Sample to Query Lifting}

The perspective of purified Hamiltonian twirling channels provides useful
intuition for establishing lower bounds for the specific Lindbladians considered
in this work.
Conceptually, the key step is the application of the randomized purification
channel $\Lambda^{(n)}_{\textup{purify}}$ introduced in~\cite{TWZ25,GML25,WW25}.
This channel satisfies, for any density operator $\rho$,
$\Lambda^{(n)}_{\textup{purify}}(\rho^{\otimes n}) = \E_{\ket{\rho}} \ket{\rho}\bra{\rho}^{\otimes n}$,
where the expectation is taken over a uniformly random purification $\ket{\rho}$
of $\rho$.
Informally, this construction allows one to obtain a purification of the density
operator without requiring additional oracle queries.
This observation motivates exploiting the connection between purified
Hamiltonian twirling channels and continuous-variable quantum phase estimation
to derive query lower bounds.

To turn the above observations into rigorous lower bounds, we will use the
technique of sample-to-query lifting~\cite{WZ25,WZ25b,CWZ25,TWZ25,GP22,
  GLM+22,GZ25} that builds on the randomized purification protocol.
This technique enables the translation of sample complexity lower bounds into
corresponding query complexity lower bounds, thereby providing a systematic
framework for our analysis.
To be more exact, let $\sigma_1$ and $\sigma_2$ be two density operators on a
Hilbert space.
Let $\SampCom\rbra{\Dis_{\sigma_1,\sigma_2}}$ denote the sample complexity to
distinguish $\sigma_1$ from $\sigma_2$, and
$\QueryD\rbra{\Dis_{\sigma_1, \sigma_2}}$ is used to denote the query complexity
to distinguish them when given a block-encoding access (see Supplementary Materials~\cite{sm25} for more
details about the input model and the results).
Surprisingly, the central results in~\cite{WZ25,CWZ25,TWZ25} establish that
\begin{equation*}
  \QueryD\rbra{\Dis_{\sigma_1, \sigma_2}} =
  \Omega\rbra*{\sqrt{\SampCom\rbra{\Dis_{\sigma_1,\sigma_2}}}}.
\end{equation*}

By combining the sample-to-query lifting technique with our Schur channel
representation (\Cref{eqn:Schur-form-Hamiltonian-twirling}), we can prove lower
bounds for our Lindbladian simulation algorithms.
We demonstrate this technique for a lower bound on the Gaussian twirling
channels as follows, and leave the details of other cases in~\cite{sm25}.

Let $\sigma_1 = \frac{1}{2}I$ and
$\sigma_2 = \rbra{\frac{1}{2}+\frac{1}{c}} \ket{0}\bra{0} +
\rbra{\frac{1}{2}-\frac{1}{c}}\ket{1}\bra{1}$.
Note that, the infidelity between $\sigma_1$ and $\sigma_2$ can be computed as
$\gamma \coloneqq 1- \Fid\rbra{\sigma_1, \sigma_2} \le \frac{4}{c^2}$.
Thus, the sample complexity of discriminating these states is
$\Omega\rbra{c^2}$.
The sample-to-query lifting therefore gives an $\Omega\rbra{c}$ query lower
bound for this problem.
  
However, by choosing $t = 9c^{2}$, we can show that performing Lindbladian
simulation with the jump operator being $\sigma_1$ or $\sigma_2$ for time $t$
for constant rounds suffices to distinguish the two cases with success
probability at least $2/3$ via the following reduction
(\cref{alg:reduction-lifting-Gaussian}).
Plugging $c = \sqrt{t}/3$, we get an $\Omega\rbra{\sqrt{t}}$ query lower bound.

\begin{algorithm}[htb]
  \caption{Reduction in Gaussian Case: Distinguish between $\sigma_1$ and
    $\sigma_2$ via Lindbladian Simulation}%
  \label{alg:reduction-lifting-Gaussian}
\begin{algorithmic}[1]
\Require Query access to a block-encoding of $H$, which is either $H_1=\sigma_1$ or $H_2=\sigma_2$.
\Ensure Decide the input being either $\sigma_1$ or $\sigma_2$.
\State{}$\rho_0 \gets \ket{+}\bra{+}$, $t \gets 9c^2$, $\texttt{flag} \gets \texttt{false}$.
\State{}Let $\mathcal{L}$ denote the Lindbladian with the single jump  $H$.
\For{$i = 1$ to $100$}
    \State{} Prepare $\rho_0$.
    \State{} Perform Lindbladian simulation $\rho_t \gets e^{t\mathcal{L}}(\rho_0)$, and measure $\rho_t$ in $\{\ket{+},\ket{-}\}$ basis.
    \State{}If the measurement outcome is $\ket{-}$, set $\texttt{flag} \gets \texttt{true}$.
\EndFor{}
\State{}If $\texttt{flag} = \texttt{false}$ then \textbf{return} $\sigma_1$; else  \textbf{return} $\sigma_2$.
\end{algorithmic}
\end{algorithm}

Similarly, we show the lower bounds for stable twirling and compound twirling
procedure, yielding the following results.
\begin{theorem}%
  \label{thm:algorithm-Lindbladian-complexity-general-lower-bounds}
  Our Hamiltonian twirling Lindbladian simulation algorithm achieves optimal time complexity for Lindbladians in~\Cref{tab:Hamiltonian-twirling-collection-results}. 
\end{theorem}

Specifically, for the Hamiltonian twirling channel induced by Poisson
distributions, we prove an $\Omega(t)$ bound, which gives the following
corollary.
\begin{corollary}%
  \label{thm:Lindbladian-comparison}
  The optimal (worst-case) complexity for simulating a Lindbladian with a single
  unitary jump operator is $\Theta\rbra{t}$.
  The optimal (worst-case) complexity for simulating a Lindbladian with a single
  Hermitian jump operator is $\Theta\rbra{\sqrt{t}}$,
\end{corollary}

This also indicates that the Hermitian condition plays a key role in achieving
the $\widetilde{O}(\sqrt{t})$ fast-forwarding as in~\cite{SGR+25} and this is not generally
true for a Lindbladian with a single non-Hermitian jump operator.

We note that, in contrast to the no–fast-forwarding result of~\cite{CL17}, which proves an $\Omega(t)$ lower bound for Lindbladians with a single jump operator, our setting differs in two essential aspects:~\cite{CL17} assumes sparse access to the jump operator and considers worst-case instances whose dimension grows with time, whereas we assume block-encoding access and obtain worst-case instances of constant dimension.

\section{Summary and Discussions}

We establish a systematic connection between Markovian open-system dynamics and
Hamiltonian twirling channels.
Leveraging the L\'{e}vy-Khintchine theorem, we characterize when dissipative
evolutions can be represented as averages over real-time Hamiltonian dynamics,
thereby providing a unified description of broad classes of Lindbladians.
For Gaussian, compound Poisson, and symmetric stable twirling distributions, we
show that the resulting dynamics can be simulated with Hamiltonian simulation
time $\Theta(t^{1/\alpha})$ for $1 \le \alpha \le 2$.
This yields explicit upper bounds and identifies a large family of Lindbladians
that admit polynomial fast-forwarding using only closed-system Hamiltonian
simulation, without ancillas or additional gate overhead, highlighting both
theoretical significance and practical relevance.
Finally, we prove that these complexities are asymptotically optimal via a novel
information-theoretic argument, establishing fundamental lower bounds on
fast-forwarding quantum simulations for the first time.

Our results point to several open questions of practical relevance.
On the theoretical side, it remains to be understood which classes of
Lindbladian dynamics admit fast-forwarded simulation, and whether superquadratic
or even exponential speedups are achievable in broader settings.
Clarifying the conditions for fast-forwarding, and its possible connections to
quantum metrology as in the Hamiltonian case~\cite{AA17}, would further inform
the design of efficient algorithms.
In addition, motivated by recent advances in the simulation of non-Markovian
open quantum systems~\cite{LW23a,LLW+24}, it is natural to ask whether
Hamiltonian twirling techniques can be extended beyond the Markovian regime.
From an algorithmic perspective, Lindbladian simulation underpins a wide range
of quantum computational tasks~\cite{VWI09}, including Gibbs and ground-state
preparation~\cite{KB16,DCL24,CKBG25,DLL25,ZDH+25}, optimization and
control~\cite{HLL+24,CLW+25}, and solving differential equations~\cite{SGAZ25}.
Using the method of~\cite{SGR+25}, our algorithm
(\Cref{alg:gaussian-twirl-cutoff}) already enables Gibbs state preparation.
A key direction for future work is to determine how far this approach can be
extended to accelerate other applications of Lindbladian simulation.

\bibliography{main.bib}

\newpage

\appendix
\onecolumngrid
\newpage

\tableofcontents
\clearpage

\section{Preliminaries}

\subsection{Some Results about Quantum Channels}%
\label{subsec:prelim-quantum-channel}

\subsubsection{The Lindbladian Master Equation}%
\label{subsec:Lindbladian-master-equation}

We review the Lindbladian master equation, which characterizes the dynamics of Markovian open quantum systems. Here we present the definition in the Schr\"{o}dinger picture.

\begin{definition}[Quantum Markovian Semigroup]
  Let $\mathcal{H}$ denote a finite-dimensional Hilbert space.
  A family of quantum channels $\{\Phi_t\}$ on $\mathcal{D}\rbra{\mathcal{H}}$
  forms a quantum Markovian semigroup if
\begin{itemize}
  \item $\Phi_0 = \mathrm{Id}$,
  \item $\Phi_{t+s} = \Phi_t \circ \Phi_s$ for all $t,s \ge 0$,
  \item each $\Phi_t$ is completely positive and trace-preserving (CPTP),
  \item $\Phi_t$ is continuous in $t$ in operator norm.
\end{itemize}
\end{definition}

Then, by~\cite{GKS76, Lin76}, the generator $\mathcal{L}$ of the semigroup, defined by
\begin{equation*}
  \mathcal{L}\sbra{\rho} = \lim_{t \to 0^+} \frac{\Phi_t\rbra{\rho} - \rho}{t},
\end{equation*}
in an $N$-dimensional Hilbert space with $\{\sigma_j\}_{j=1}^{N-1}$ being a basis of traceless operators,
has the following 
\emph{Gorini-Kossakowski-Sudarshan (GKS)} form
\begin{equation}%
    \label{eq:GKS-form-Lindbladian}
     \frac{d\rho}{dt} = \mathcal{L}[\rho]
  = -\ii \sbra*{H, \rho} + \sum_{j,k=1}^{N^2-1} A_{jk}\rbra*{\sbra{\sigma_j \rho, \sigma_k^{\dagger}} + \sbra{\sigma_j, \rho\sigma_k^{\dagger}}},
\end{equation}
where $H$ is a self-adjoint operator, and $A$ is a positive semidefinite $(N^2-1)\times (N^2-1)$ matrix with entries $A_{jk}$.

A more familiar form is the
\emph{Lindbladian} form:
\begin{equation}%
  \label{eq:lindblad}
  \frac{d\rho}{dt} = \mathcal{L}[\rho]
  = -\ii \sbra*{H, \rho} + \sum_{j} \rbra*{ L_j \rho L_j^\dagger
  - \frac{1}{2}\cbra*{L_j^\dagger L_j, \rho }},
\end{equation}
where $H = H^\dagger$ is the \emph{effective (system) Hamiltonian} and $\{L_j\}s$ are the
\emph{jump operators} characterizing the system-environment interaction.

\subsubsection{Schur Channels}
We first recall the definition of Schur channel.
\begin{definition}[Schur Maps and Schur Channels, see~{\cite[Definition 4.16]{Wat18}}]%
  \label{def:Schur-channel}
  Let $\mathcal{H}$ be a $d$-dimensional Hilbert space.
  A map $\Phi$ from the set of linear operators on $\mathcal{H}$ to the set of linear
  operators on $\mathcal{H}$ is said to be a Schur map, if there exists a linear
  operator $A$ such that
  \begin{equation*}
    \Phi(X) = A\odot X,
  \end{equation*}
  where $\odot$ denotes the Hadamard product (also known as the entrywise product).
  If $\Phi$ is a channel, then we say it is a Schur channel.
\end{definition}

\begin{proposition}[Completely Positive and Trace-Preserving Criteria of Schur Maps,~{\cite[Proposition 4.17 and 4.18]{Wat18}}]%
  \label{prop:cptp-Schur-channel}
  Let $\mathcal{H}$ be a $d$-dimensional Hilbert space, and $A$ be a linear
  operator on $\mathcal{H}$.
  Consider the Schur map $\Phi$ defined as $\Phi(X) = A\odot X$, then $\Phi$ is completely
  positive if and only if $A$ is positive semidefinite, and trace-preserving if
  and only if $\bra{j}A\ket{j} = 1$ for all $0\le j\le d-1$.
\end{proposition}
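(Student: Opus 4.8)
The plan is to verify the trace-preservation criterion and the complete-positivity criterion separately, since they constrain $A$ in independent ways, and then assemble the ``if and only if'' for each.

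For trace preservation, I would compute $\operatorname{Tr}\rbra{\Phi(X)}$ directly. Because the Hadamard product acts entrywise, the $j$-th diagonal entry of $A\odot X$ is $\bra{j}A\ket{j}\,\bra{j}X\ket{j}$, so $\operatorname{Tr}\rbra{\Phi(X)}=\sum_{j=0}^{d-1}\bra{j}A\ket{j}\,\bra{j}X\ket{j}$. If $\bra{j}A\ket{j}=1$ for every $j$, this is exactly $\operatorname{Tr}(X)$, giving the ``if'' direction; conversely, evaluating the identity $\operatorname{Tr}\rbra{\Phi(X)}=\operatorname{Tr}(X)$ on the rank-one operators $X=\ket{j}\bra{j}$ forces $\bra{j}A\ket{j}=1$. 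This half is a one-line calculation and I expect no difficulty.

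For complete positivity I would argue both implications. For the ``if'' direction, assuming $A\succeq 0$, take a spectral decomposition $A=\sum_k\ket{v_k}\bra{v_k}$ (folding the eigenvalues into the vectors) and check entrywise that $\sum_k \diag(v_k)\,X\,\diag(v_k)^\dagger$ has $(i,j)$-entry $\sum_k (v_k)_i\overline{(v_k)_j}\,X_{ij}=A_{ij}X_{ij}$, i.e.\ equals $A\odot X$. This exhibits $\Phi$ in Kraus form with Kraus operators $\diag(v_k)$, hence $\Phi$ is completely positive. For the ``only if'' direction, I would use that complete positivity implies positivity, together with the observation that the all-ones rank-one projector $\ket{e}\bra{e}$ with $\ket{e}=\sum_{j}\ket{j}$ satisfies $\Phi\rbra{\ket{e}\bra{e}}=A\odot\ket{e}\bra{e}=A$, so positivity of $\Phi$ yields $A\succeq 0$. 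Alternatively---and more in the spirit of the vectorization tools set up above---one can invoke Choi's theorem: the Choi operator is $J(\Phi)=\sum_{i,j}A_{ij}\,(\ket{i}\otimes\ket{i})(\bra{j}\otimes\bra{j})=V A V^\dagger$, where $V$ is the isometry $\ket{i}\mapsto\ket{i}\otimes\ket{i}$; hence $J(\Phi)\succeq 0$ iff $A\succeq 0$, and Choi's theorem identifies the former with complete positivity of $\Phi$.

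I do not anticipate a genuine obstacle: this is a textbook fact from~\cite[Propositions 4.17 and 4.18]{Wat18}. The only points requiring a little care are (i) keeping the two criteria logically separate---since $A\succeq 0$ by itself does not imply $\bra{j}A\ket{j}=1$, a Schur map can be CP without being TP and vice versa, so both halves of each ``iff'' must be checked; and (ii) if one routes the CP argument through the Choi operator, matching the tensor-ordering and transpose conventions to those fixed by $\vecop$ earlier, so that $J(\Phi)$ comes out as the claimed conjugate of $A$ rather than of $A^{\intercal}$ (harmless here, since $A\succeq 0\iff A^{\intercal}\succeq 0$, but worth stating).
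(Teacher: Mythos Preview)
Your proposal is correct, but note that the paper does not actually prove this proposition: it is stated as a preliminary result with a direct citation to~\cite[Propositions~4.17 and~4.18]{Wat18}, and no proof is given in the paper itself. Your argument---the diagonal-entry computation for trace preservation, and either the Kraus decomposition $A\odot X=\sum_k\diag(v_k)\,X\,\diag(v_k)^\dagger$ or the Choi-matrix identification $J(\Phi)=VAV^\dagger$ for complete positivity---is the standard textbook proof and would be entirely appropriate here.
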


\subsection{The Vectorization Technique}%
\label{subsec:vectorization}

In this part, we review our basic tool, namely the operator-vector
correspondence, or the vectorization technique discussed in~\cite{Wat18}.

Let $\mathcal{X}$ and $\mathcal{Y}$ be two complex Euclidean spaces.
There is a correspondence
$\vecop: \mathrm{L}(\mathcal{Y},\mathcal{X})\to \mathcal{X}\otimes \mathcal{Y}$
given by
\begin{equation*}
  \vecop\rbra*{E_{a,b}} = e_a\otimes e_b,
\end{equation*}
or in Dirac notation,
\begin{equation*}
  \vecop\rbra*{\ket{i}\bra{j}} = \ket{i}\ket{j}.
\end{equation*}

Note that the $\vecop$ operation depends on the choice of the basis. We have the
following crucial property for this operation.

\begin{proposition}[{\cite[Equation 1.132]{Wat18}}]
  For any $A_0\in \mathrm{L}(\mathcal{X}_0,\mathcal{Y}_0)$,
  $A_1\in \mathrm{L}(\mathcal{X}_1,\mathcal{Y}_1)$, and
  $B\in \mathrm{L}(\mathcal{X}_1,\mathcal{X}_0)$, we have
  \begin{equation*}
    \rbra*{A_0\otimes A_1} \vecop\rbra*{B} = \vecop\rbra*{A_0BA_1^{\intercal}}.
  \end{equation*}
\end{proposition}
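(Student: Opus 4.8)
The plan is to reduce to matrix units and then compute directly. Both sides of the claimed identity are linear in each of $A_0$, $A_1$, and $B$ separately, and $\vecop$ is itself linear, so it suffices to verify the identity when $A_0 = E_{i,j}$ ranges over the standard operator basis of $\mathrm{L}(\mathcal{X}_0,\mathcal{Y}_0)$, $A_1 = E_{k,l}$ over that of $\mathrm{L}(\mathcal{X}_1,\mathcal{Y}_1)$, and $B = E_{m,n}$ over that of $\mathrm{L}(\mathcal{X}_1,\mathcal{X}_0)$, with the index ranges dictated by the respective domains and codomains. Everything then follows from three elementary facts: the tensor product acts factorwise, $\rbra*{E_{i,j}\otimes E_{k,l}}\rbra*{e_m\otimes e_n} = \rbra*{E_{i,j}e_m}\otimes\rbra*{E_{k,l}e_n}$; matrix units compose by index matching, $E_{p,q}e_r = \delta_{q,r}\,e_p$ and $E_{p,q}E_{r,s} = \delta_{q,r}\,E_{p,s}$; and $E_{k,l}^{\intercal} = E_{l,k}$.

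For the left-hand side I would substitute $\vecop\rbra*{E_{m,n}} = e_m\otimes e_n$ and apply the factorwise action, obtaining $\delta_{j,m}\,\delta_{l,n}\,\rbra*{e_i\otimes e_k}$. For the right-hand side I would first simplify $A_0 B A_1^{\intercal} = E_{i,j}E_{m,n}E_{l,k} = \delta_{j,m}\,\delta_{n,l}\,E_{i,k}$ and then apply $\vecop$, obtaining $\delta_{j,m}\,\delta_{n,l}\,\rbra*{e_i\otimes e_k}$. The two expressions coincide, which proves the identity on basis elements and hence, by linearity, in general.

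There is no genuine obstacle here: the statement is a standard bookkeeping identity (it is Equation~1.132 of~\cite{Wat18}), and the only thing requiring a little care is tracking which index lives in which space, so that the operator products $A_0B$, $\rbra*{A_0B}A_1^{\intercal}$, and the tensor-product action are all well-typed --- which is exactly what the hypotheses $A_0\in\mathrm{L}(\mathcal{X}_0,\mathcal{Y}_0)$, $A_1\in\mathrm{L}(\mathcal{X}_1,\mathcal{Y}_1)$, $B\in\mathrm{L}(\mathcal{X}_1,\mathcal{X}_0)$ guarantee. If one prefers a coordinate-free argument, the same computation runs on rank-one operators $B = uv^{\dagger}$ with $u\in\mathcal{X}_0$ and $v\in\mathcal{X}_1$, using $\vecop\rbra*{uv^{\dagger}} = u\otimes\overline{v}$ and $v^{\dagger}A_1^{\intercal} = \rbra*{\overline{A_1}v}^{\dagger}$, which yields $\rbra*{A_0u}\otimes\rbra*{A_1\overline{v}}$ on both sides, and then extends to all $B$ by linearity.
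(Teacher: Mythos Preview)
Your argument is correct and is exactly the standard verification of this identity. However, there is nothing to compare it against: the paper does not give its own proof of this proposition. It is stated in the preliminaries as a cited fact (Equation~1.132 of~\cite{Wat18}) and used later, but no proof appears in the paper.
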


\subsection{Fourier Transform}

Recall that for a function $f$, its Fourier transform $\mathcal{F}\rbra{f}$ is
defined as
\begin{equation*}
  \mathcal{F}\rbra*{f}\rbra{\xi} = \int_{\mathbb{R}^d} f(x) \exp\rbra*{-2\pi \ii \xi\cdot x}\dd{x}.
\end{equation*}

We need the following proposition stating that the Fourier transform of a
Gaussian distribution density function is also Gaussian.
\begin{proposition}[See~{\cite{Fol99}}]%
  \label{prop:Gaussian-Fourier-transform}
  Let $f(x): \mathbb{R}^d\to \mathbb{R}$ be the function such that
  $f(x) = \exp\rbra{-\pi a\norm{x}^2}$.
  Then, the Fourier transform of $f$ is given as
  $\mathcal{F}\rbra{f}(\xi) = a^{-d/2}\exp\rbra{-\pi \norm{\xi}^2/a }$.
\end{proposition}

The above proposition is called Hubbard-Stratonovich transformation (\cite{Str57,Hub59})
and applied as a way of simulating imaginary-time evolution using real-time evolutions
(see~\cite{CS17, ACNR22, GANW25}).

\subsection{Some Results of Probability Theory}%
\label{subsec:prob-preliminary}

We first recall some important definitions in probability theory.

\begin{definition}[Characteristic Function, see~{\cite[Definition 2.1]{Sat13}}]%
  \label{def:characteristic-function}
  Let $\mu$ be a probability measure on $\mathbb{R}^d$. Its characteristic function $\hat{\mu}$ is defined as
  \begin{equation*}
    \hat{\mu}\rbra*{z} = \int_{\mathbb{R}^d} \e^{\ii z \cdot x } \mu\rbra*{\dd{x}}, z\in \mathbb{R}^d.
  \end{equation*}
  The characteristic function of a random variable $X$ is given by $\E\sbra{\e^{\ii z\cdot X}}$.
\end{definition}

\begin{definition}[Convolution of Two Distributions, see~{\cite[Definition 2.4]{Sat13}}]
  Let $\mu_1$ and $\mu_2$ be two probability measures on $\mathbb{R}^d$. Their convolution $\mu = \mu_1\star \mu_2$
  is a probability measure given by
  \begin{equation*}
    \mu \rbra*{B} = \iint_{\mathbb{R}^d\times \mathbb{R}^d} 1_{B}\rbra*{x+y} \mu_1\rbra*{\dd{x}} \mu_2 \rbra*{\dd{y}}, B\in \mathcal{B}\rbra*{\mathbb{R}^d},
  \end{equation*}
  where $\mathcal{B}\rbra*{\mathbb{R}^d}$ denote the Borel measurable sets in $\mathbb{R}^d$.
\end{definition}

Here are some basic properties of the characteristic function and the
convolution: if $X_1$ and $X_2$ are two independent random variables with
distributions $\mu_1$ and $\mu_2$, then $X_1+X_2$ has distribution $\mu_1\star \mu_2$.
We also have, for $\mu = \mu_1\star \mu_2$, $\hat{\mu} = \hat{\mu_1} \hat{\mu_2}$.

\subsubsection{Infinitely Divisable Distributions}
In the following, we recall the concept of infinitely divisible distribution.
Intuitively speaking, if a random variable $X$ follows an infinitely divisible
distribution, then for any positive integer $n$, it can be written as a sum of
$n$ identical and independent random variables.

\begin{definition}[Infinitely Divisible Distributions, see~{\cite[Definition 7.1]{Sat13}}]%
  \label{def:infinite-divisible}
  A probability measure $\mu$ on $\mathbb{R}^d$ is said to be infinitely divisible,
  if for every positive integer $n$, there exists a probability measure $\mu_n$ on $\mathbb{R}^d$, such that
  the $n$-fold convolution, $\mu_n\star \mu_n\star \dots \star \mu_n$ equals  $\mu$.
\end{definition}

\begin{table*}[htb]
\centering
\begin{ruledtabular}
\begin{tabular}{cccc}
\textbf{Distribution} &
\textbf{PMF / PDF} &
\textbf{Characteristic Function} &
\textbf{Infinitely Divisible?} \\
\midrule

Bernoulli$(p)$ &
$p^x(1-p)^{1-x}$, $x\!\in\!\{0,1\}$ &
$(1-p)+p\,\mathrm{e}^{\,\mathrm{i} t}$ &
No \\
\midrule

Uniform$[a,b]$ &
$\frac{1}{b-a}$ on $[a,b]$ &
$\frac{\mathrm{e}^{\mathrm{i} t b}-\mathrm{e}^{\mathrm{i} t a}}{\mathrm{i}t(b-a)}$ &
No \\
\midrule

Gaussian $N(0,\sigma^2)$ &
$\frac{1}{\sqrt{2\pi\sigma^2}}
 \mathrm{e}^{-\frac{x^2}{2\sigma^2}}$ &
$\exp\rbra*{- \frac{1}{2} \sigma^2 t^2}$ &
Yes \\
\midrule

Poisson$(\lambda)$ &
$\frac{ \mathrm{e}^{-\lambda}\lambda^k }{k!}$ &
$\exp\rbra*{\lambda(\mathrm{e}^{\,\mathrm{i} t}-1)}$ &
Yes \\
\midrule

Laplace$(0,b)$ &
$\frac{1}{2b}\,\mathrm{e}^{-\frac{\abs{x}}{b}}$ &
$\frac{1}{1+b^2 t^2}$ &
Yes \\
\midrule

Symmetric Cauchy$(\gamma)$  &
$\frac{\gamma}{\pi(\gamma^2+x^2)}$ &
$\exp\rbra*{-\gamma |t|}$ &
Yes \\
\midrule

Symmetric $\alpha$-stable &
(no closed form) &
$\exp\rbra*{-c |t|^\alpha}$ &
Yes \\
\end{tabular}
\end{ruledtabular}
\caption{Common distributions, their probability mass (or density) functions,
  characteristic functions, and infinite divisibility.}
\end{table*}

The following L\'evy-Khintchine representation theorem characterizes when a
distribution is infinitely divisible via its characteristic function.

\begin{theorem}[L\'evy-Khintchine Representation Theorem, see~{\cite[Theorem 8.1]{Sat13}}]%
  \label{thm:levy-khintchine}
  Let $D$ denote the set $\cbra*{\norm{x}\le 1}$.
  \begin{itemize}
    \item If $\mu$ is an infinitely divisible probability distribution on $\mathbb{R}^d$, then its characteristic function satisfies
          \begin{equation}%
            \label{eqn:levy-khintchine-form}
            \hat{\mu}\rbra*{z}
            = \exp \sbra*{-\frac{1}{2}\ave{z, Az} + \ii \ave{\gamma, z}
              + \int_{\mathbb{R}^d} \rbra*{ \e^{\ii z\cdot x} -1 - \ii z\cdot x 1_{D}\rbra*{x} }\nu\rbra*{\dd{x}} },
          \end{equation}
          where $A$ is a symmetric positive semidefinite $d\times d$ matrix, $\nu$ is a Borel measure on $\mathbb{R}^d$ that satisfies
          \begin{equation}%
            \label{eqn:levy-measure-condition}
            \int_{\mathbb{R}^d} \min\cbra*{\norm{x}^2, 1} \nu \rbra*{\dd{x}} < +\infty, \nu \rbra{\cbra{0}} = 0,
          \end{equation}
          and $\gamma\in \mathbb{R}^d$.
          Moreover, the representation of $\hat{\mu}$ by $A$, $\nu$, and $\gamma$ is
          unique.
    \item Conversely, for a positive semidefinite matrix $A$, a measure $\nu$
          satisfying~\Cref{eqn:levy-measure-condition}, and $\gamma\in \mathbb{R}^d$,
          there exists a distribution $\mu$ whose characteristic function
          satisfies~\Cref{eqn:levy-khintchine-form}.
  \end{itemize}
\end{theorem}

\begin{remark}
  As noted in~\cite[Section 3.8]{Var01} (see also~{\cite[Remark 8.4]{Sat13}}),
  the function $x1_D\rbra{x}$ could be replaced by other bounded continuous
  functions $\theta\rbra{x}$ satisfying
  $\norm{\theta\rbra{x}-x}\le C\norm{x}^{3}$ for some constant $C$.
  One such example is $\theta\rbra{x}=\frac{x}{1+\norm{x}^{2}}$.
\end{remark}

As noted in~{\cite[Example 7.2]{Sat13}}, Gaussian, Cauchy and
$\delta$-distributions on $\mathbb{R}^d$ are infinitely divisible.
Poisson, geometric, negative binomial, exponential, and $\Gamma$ distributions
on $\mathbb{R}$ are also infinitely divisible.
Besides these, another important class of infinitely divisible distributions is
compound Poisson, which we will discuss later.

\subsubsection{Gaussian Distributions}

We use $\mathcal{N}(\mu, \sigma^2)$ to denote the normal distribution with mean $\mu$ and
variance $\sigma^2$ throughout the paper.

\begin{lemma}[Mill's Inequality, See~\cite{Hu17}]
  Suppose a random variable $X$ follows the normal distribution $\mathcal{N}(0,\sigma^2)$.
  Then, for every $t > 0$, we have
  \begin{equation*}
    \Pr\rbra*{\abs{X} > t} \le \sqrt{\frac{2}{\pi}} \frac{\sigma}{t} \exp \rbra*{-\frac{t^2}{2\sigma^2}}.
  \end{equation*}
\end{lemma}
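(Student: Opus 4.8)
The plan is to reduce to a one-sided tail by symmetry and then estimate the Gaussian integral by the elementary trick of inserting the factor $x/t \ge 1$ into the integrand, which makes the integral exactly solvable. First I would write, using that the density of $X\sim\mathcal{N}(0,\sigma^2)$ is even,
\begin{equation*}
  \Pr\rbra*{\abs{X} > t} = 2\Pr\rbra*{X > t}
  = \frac{2}{\sqrt{2\pi}\,\sigma} \int_{t}^{\infty} \exp\rbra*{-\frac{x^2}{2\sigma^2}} \dd{x}.
\end{equation*}

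Next, for every $x \ge t > 0$ we have $1 \le x/t$, so the integrand is bounded above by $\frac{x}{t}\exp\rbra*{-x^2/(2\sigma^2)}$, whose antiderivative is $-\frac{\sigma^2}{t}\exp\rbra*{-x^2/(2\sigma^2)}$. Carrying out the integration from $t$ to $\infty$ gives
\begin{equation*}
  \int_{t}^{\infty} \exp\rbra*{-\frac{x^2}{2\sigma^2}} \dd{x}
  \le \int_{t}^{\infty} \frac{x}{t}\exp\rbra*{-\frac{x^2}{2\sigma^2}} \dd{x}
  = \frac{\sigma^2}{t}\exp\rbra*{-\frac{t^2}{2\sigma^2}}.
\end{equation*}
Substituting this back and simplifying the constant $\frac{2}{\sqrt{2\pi}}\cdot\frac{\sigma^2}{\sigma} = \sqrt{2/\pi}\,\sigma$ yields exactly the claimed bound.

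There is essentially no serious obstacle here; the only ``idea'' is the monotonicity substitution $1 \le x/t$ that turns a non-elementary Gaussian tail integral into an exact exponential. The remaining work — the symmetry reduction and the constant bookkeeping — is routine. One could alternatively phrase the same argument via a change of variables $x = \sigma u$ to reduce to the standard normal first, but that only relocates the constants and does not simplify the estimate, so I would keep the direct computation above.
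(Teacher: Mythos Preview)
Your proof is correct and is the standard derivation of Mill's inequality. The paper itself does not supply a proof of this lemma at all; it merely states it as a preliminary fact with a citation, so there is nothing to compare against beyond noting that your argument is exactly the classical one the cited reference would contain.
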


\subsubsection{Stable Distributions}\label{appendix:stable-distribution}
We first recall some basic knowledge of the symmetric stable distributions.
For more details, we refer readers to the survey~\cite{Jan22} and the book~\cite{Zol86}.
\begin{definition}[Stable Distributions, see~{\cite[Definition 1, Chapter
    VI]{Fel09}} and~{\cite[Definition 13.1]{Sat13}}]%
  \label{def:stable-distribution}
  Let $\mu$ be an infinitely divisible measure on $\mathbb{R}^d$ and independent
  random variables $X,X_1,X_2\sim \mu$.
  Then, we say $\mu$ is stable, if for any $A>0$ and $B>0$, there exists $C >0$
  and a real number $D$, such that
  \begin{equation*}
    AX_1+BX_2 \overset{d}{=} CX + D,
  \end{equation*}
  where $\overset{d}{=}$ means equality in distribution, i.e., the two random variables have the same cumulative distribution function.
\end{definition}

\begin{theorem}[Stable Index, see~{\cite[Theorem 1, Chapter VI]{Fel09}}]%
  \label{thm:stable-index}
  For any stable random variable $X$ following a distribution $\mu$, there
  exists a constant $\alpha\in \interval[open left]{0}{2}$, such that $A, B, C$
  in~\Cref{def:stable-distribution} satisfies
  \begin{equation*}
    C^{\alpha} = A^{\alpha}+B^{\alpha}.
  \end{equation*}
  $\alpha$ is called the stable index of the distribution $\mu$, and we also say $\mu$ is $\alpha$-stable.
\end{theorem}

For example, the normal distribution $\mathcal{N}\rbra{0, \sigma^2}$ is $2$-stable.
Actually, it can be shown that (see~\cite[Theorem 14.1]{Sat13}) a distribution
is $2$-stable if and only if it is Gaussian.

For general $\alpha$-stable distributions, it is hard to explicitly write out their
probability density functions.
However, their characteristic functions have explicit forms.
For simplicity, we consider only the symmetric cases.

\begin{theorem}[Characteristic Functions of Stable Distributions,
  see~{\cite[Theorem 14.4]{Sat13}}]%
  \label{thm:characteristic-function-stable}
  Let $\mu$ be a probability measure on $\mathbb{R}^d$.
  Then $\mu$ is rotation-invariant and $\alpha$-stable for $0 < \alpha \le 2$ if and only if
  \begin{equation*}
    \hat{\mu}\rbra{z} = \exp \rbra*{-c\norm{z}^{\alpha}}
  \end{equation*}
  for some $c >0$.
\end{theorem}

For simplicity, we use $\StableS{\alpha}{c}$ to denote the symmetric $\alpha$-stable
distribution with parameter $c$ in its characteristic function.

\begin{theorem}[Absolute Moment for Symmetric Stable Distributions,
  see~{\cite[Proposition 2.5]{Zol86}}]%
  \label{thm:general-absolute-moment}
  Let $X$ be a random variable that follows $\StableS{\alpha}{\lambda}$. For any $0 < s < \alpha $,
  we have
  \begin{equation*}
    \E\sbra*{\abs{X}^s} \le 2 \lambda^{s/\alpha} \sin \rbra*{\frac{\pi s}{2}} \Gamma\rbra*{s}\Gamma\rbra*{1-\frac{s}{\alpha}}.
  \end{equation*}
  Specifically, we have
  \begin{equation*}
     \E\sbra*{\abs{X}} \le  C_{\alpha} \lambda^{1/\alpha}
   \end{equation*}
   for some constant $C_{\alpha}$ that depends only on $\alpha$.
\end{theorem}

\subsubsection{Compound Poisson Distributions}
\begin{definition}[Compound Poisson Distribution, see~{\cite[Definition 4.1]{Sat13}}]
  A distribution $\mu$ on $\mathbb{R}^d$ is compound Poisson, if there exists
  some $\lambda > 0$ and some distribution $\sigma$ on $\mathbb{R}^d$ such that
  $\sigma\rbra{\cbra{0}}=0$, and
  \begin{equation*}
    \hat{\mu}\rbra*{z} = \exp \rbra*{\lambda \rbra*{\hat{\sigma}\rbra{z}-1}}.
  \end{equation*}
\end{definition}

An equivalent definition of a compound Poisson is as follows (see~\cite[Chapter
XII.2]{Fel76}). Suppose $N\sim \Poisson\rbra{\lambda}$. Let
$X_1, X_2, \dots, X_n, \dots$ be a sequence of independent and identically
distributed random variables. The distribution of the random variable
$S_{N} = \sum_{j=1}^{N} X_{j}$ is a compound Poisson. When $X_{j}$s are all $1$,
the distribution reduces to the standard Poisson distribution $\Poisson\rbra{\lambda}$.

\section{Hamiltonian Twirling Channels}%
\label{sec:twirling}

For convenience, we introduce the following definition of Hamiltonian Twirling
channels. We note that, in~\cite{AKMV10}, the authors have already proposed a
slightly more general and abstract notion called \emph{twirling superoperators}
for a locally compact, second countable, Hausdorff (topological space) Lie group
with a projective representation and a probability measure.

\begin{definition}[Hamiltonian Twirling Channel, see~{\cite[Proposition 4.2]{AKMV10}}]%
  \label{def:Hamiltonian-twirling-channel}
  Let $\mu$ denote a probability distribution on $\mathbb{R}$ and
  $p_{\mu}$ denote its probability density function.
  Let $H$ be an $n$-qubit Hamiltonian.
  We call the following channel, which acts on $n$-qubit density operators,
  \begin{equation*}
    \Phi_{H,\mu}\rbra{\rho}
    = \int_{-\infty}^{\infty}  p_{\mu}\rbra{s} \e^{\ii H s} \rho  \e^{-\ii H s}\dd{s}
    = \E_{s\sim \mu}\sbra{ \e^{\ii H s} \rho  \e^{-\ii H s}},
  \end{equation*}
  the Hamiltonian twirling channel of $H$ with distribution $\mu$.
  If $H$ is clear from the context, we will simply write $\Phi_{\mu}$.
\end{definition}

\begin{proposition}[Every Hamiltonian Twirling Channel is a Schur Channel]%
  \label{prop:hamiltonian-twirling-is-schur}
  Let $H$ be an $n$-qubit Hamiltonian with eigendecomposition
  $H =\sum_j \lambda_j \ket{j}\bra{j}$, $\rho$ be an $n$-qubit density operator,
  $\mu$ be a fixed probability distribution on $\mathbb{R}$ with density
  function $p_{\mu}$.
  Let $\rho_{jk} = \bra{j} \rho \ket{k}$.
  Then, we have
  \begin{equation*}
    \Phi_{\mu} \rbra*{\rho} = \sum_{jk} \hat{\mu}\rbra*{\lambda_j-\lambda_k} \rho_{jk} \ket{j}\bra{k},
 \end{equation*}
 where $\hat{\mu}$ is the characteristic function of the distribution
 $\mu$, i.e.,
 \begin{equation*}
   \hat{\mu}\rbra{\omega} = \int_{\mathbb{R}} p_{\mu}\rbra{s} \e^{\ii \omega s} \dd{s}.
 \end{equation*}
 In other words, $\Phi_{\mu}$ is a Schur channel with the matrix with entry
 $jk$ being $\hat{\mu}\rbra*{\lambda_j-\lambda_k}$.
\end{proposition}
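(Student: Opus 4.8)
The plan is to compute $\Phi_{\mathcal{D}}\rbra{\rho}$ directly in the eigenbasis of $H$ and read off the Hadamard-product structure. First I would diagonalize $H = \sum_j \lambda_j \ket{j}\bra{j}$ with $\cbra{\ket{j}}$ an orthonormal eigenbasis, so that the real-time evolution factorizes as $\e^{-\ii Hs} = \sum_j \e^{-\ii \lambda_j s}\ket{j}\bra{j}$. Conjugating $\rho = \sum_{j,k}\rho_{jk}\ket{j}\bra{k}$ by $\e^{-\ii Hs}$ then gives, for each fixed $s$,
\[
  \e^{-\ii Hs}\rho\,\e^{\ii Hs} \;=\; \sum_{j,k} \e^{-\ii(\lambda_j-\lambda_k)s}\,\rho_{jk}\,\ket{j}\bra{k}.
\]

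Next I would integrate this identity against the density $p_{\mathcal{D}}$. Because the Hilbert space is finite-dimensional, the sum over $j,k$ is finite, so the integral and the sum may be exchanged (each summand is bounded in modulus by $\abs{\rho_{jk}}\,p_{\mathcal{D}}\rbra{s}$, which is integrable), yielding
\[
  \Phi_{\mathcal{D}}\rbra{\rho} \;=\; \sum_{j,k}\rbra*{\int_{\mathbb{R}} p_{\mathcal{D}}\rbra{s}\,\e^{-\ii(\lambda_j-\lambda_k)s}\dd{s}}\rho_{jk}\,\ket{j}\bra{k}.
\]
By the definition of $\phi_{\mathcal{D}}$ in the statement, the parenthesized integral is exactly $\phi_{\mathcal{D}}\rbra{\lambda_j-\lambda_k}$, which is the claimed formula. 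To conclude that $\Phi_{\mathcal{D}}$ is a Schur channel, I would observe that the last display says precisely $\Phi_{\mathcal{D}}\rbra{\rho} = A\odot \rho$ in the eigenbasis of $H$, where $A$ has $(j,k)$-entry $\phi_{\mathcal{D}}\rbra{\lambda_j-\lambda_k}$; this is a Schur map in the sense of \Cref{def:Schur-channel}, and it is manifestly CPTP since it is an average of the unitary conjugations $\rho\mapsto \e^{-\ii Hs}\rho\,\e^{\ii Hs}$. (If one prefers, CPTP also follows from \Cref{prop:cptp-Schur-channel}: $A$ is positive semidefinite because $\phi_{\mathcal{D}}$ is a characteristic function, hence a positive-definite function by Bochner's theorem, and $\bra{j}A\ket{j}=\phi_{\mathcal{D}}\rbra{0}=1$.)

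I do not anticipate a substantive obstacle here; the argument is a one-line computation once the eigenbasis is fixed. The only points requiring mild care are: the interchange of the (finite) sum with the integral, which is routine; the possibility that $H$ has repeated eigenvalues, which is harmless since one simply picks any orthonormal eigenbasis and the off-diagonal structure is unaffected (within a degenerate block $\lambda_j-\lambda_k=0$, so the entry is $\phi_{\mathcal{D}}\rbra{0}=1$ regardless of the chosen basis); and the slightly nonstandard sign convention in $\phi_{\mathcal{D}}\rbra{\omega}=\int p_{\mathcal{D}}\rbra{s}\e^{-\ii\omega s}\dd{s}$, i.e.\ $\phi_{\mathcal{D}}\rbra{\omega}$ is the characteristic function of \Cref{def:characteristic-function} evaluated at $-\omega$, which must be tracked to reproduce the stated formula verbatim.
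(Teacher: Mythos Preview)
Your proposal is correct and follows essentially the same direct computation as the paper: diagonalize $H$, expand $\e^{-\ii Hs}\rho\,\e^{\ii Hs}$ in the eigenbasis, swap the finite sum with the integral, and identify the resulting integral as $\phi_{\mathcal{D}}(\lambda_j-\lambda_k)$. Your additional remarks on justifying the sum--integral exchange, handling degenerate eigenvalues, the sign convention in $\phi_{\mathcal{D}}$, and the CPTP/Bochner verification go beyond what the paper writes but are consistent with it.
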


\begin{proof}
  This can be proved by a direct computation.
  In fact, note that
  $\e^{-\ii H s} = \sum_j \e^{-\ii \lambda_j s} \ket{j}\bra{j}$.
  Therefore, we have
  \begin{equation*}
    \begin{aligned}
      \Phi_{\mu}\rbra*{\rho}
      &= \int_{\mathbb{R}} p_{\mu}\rbra{s} \e^{\ii H s} \rho \e^{-\ii H s}  \dd{s} \\
      &= \int_{\mathbb{R}} p_{\mu}\rbra{s} \sum_{jk} \e^{\ii \rbra*{\lambda_j-\lambda_{k}} s}  \rho_{jk}\ket{j}\bra{k}   \dd{s} \\
      &= \sum_{jk}\rbra*{ \int_{\mathbb{R}} p_{\mu}\rbra{s} \e^{\ii \rbra*{\lambda_j-\lambda_{k}} s}  \dd{s}} \; \rho_{jk}\ket{j}\bra{k},   \\
    \end{aligned}
 \end{equation*}
 which completes the proof.
\end{proof}

\subsection{Lindbladian Generator for Hamiltonian Twirling Channel with
  Infinitely Divisible Distributions}%
\label{subsec:generator-hamiltonian-twirling-infinite-divisible}

One central result about twirling channels (or twirling superoperators) is the
correspondence theorem about the conditions of the probability measures required
for a twirling channel being Markovian. We will state the theorem for a single
measure, that if it is infinitely divisible, then the resulting Hamiltonian
twirling channel can be written as the dynamics of a Lindbladian. This form of
Lindbladian, to the best of our knowledge, first appears in~\cite{Hol95}
(see~{\cite[Theorem 1]{Hol95}}) as non-commutative L\'evy-Khintchine type formulae
to characterize translation-covariant Lindbladians. Our statement of the theorem
here corresponds to a more abstract result in~\cite{AKMV10}, where the authors
considered a semigroup of measures on a Lie group and a unitary representation of it with L\'evy-Khintchine
representations in GKS form, while we consider only the special case that the Lie group is $(\mathbb{R},+)$ and the unitary representation corresponds to $\mathrm{e}^{\mathrm{i}H\cdot }$ in explicit Lindbladian form.

\begin{theorem}[L\'evy-Lindbladian Correspondence of a Hamiltonian Twirling
  Channel, see~{\cite[Theorem 5.1]{AKMV10}}]%
  \label{thm:levy-lindbladian-correspondence-hamiltonian-twirling}
  Let $H$ be an $n$-qubit Hamiltonian, $\mu$ denote an infinitely
  divisible probability distribution on $\mathbb{R}$, and
  $\Phi_{H,\mu}\rbra{\cdot}$ denote the Hamiltonian twirling channel of $H$ with
  distribution $\mu$.
  Then, $\Phi_{H,\mu}$ can be written as $\e^{\mathcal{L}}$ with
  $\mathcal{L}$ being a time-independent Lindbladian operator, whose effective
  Hamiltonian and jump operators commute with $H$.
\end{theorem}

\begin{proof}
  Consider the spectral decomposition of the Hamiltonian as $H = \sum_j \lambda_j\ket{j}\bra{j}$.
  By~\Cref{prop:hamiltonian-twirling-is-schur}, we know that for any density
  operator $\rho$,
  $\Phi_{\mu}\rbra{\rho} = \sum_{jk} \hat{\mu}\rbra*{\lambda_j-\lambda_k} \rho_{jk} \ket{j}\bra{k}$,
  where $\hat{\mu}$ is the characteristic function of $\mu$.
  Since $\mu$ is infinitely divisible, by~\Cref{thm:levy-khintchine}, we know
  we can write $\hat{\mu}$ as $\e^{\psi}$ for some function $\psi$ of the form
  \begin{equation}
    \label{eqn:levy-form-character-in-proof-correspondence}
           \psi\rbra*{z}
           = -\frac{\sigma^2}{2}z^2 + \ii \gamma z
             + \int_{\mathbb{R}} \rbra*{ \e^{\ii z s} -1 - \ii z s 1_{D}\rbra*{s} }\nu \rbra*{\dd{s}},
  \end{equation}
  for some $\sigma \ge 0$, $\gamma\in \mathbb{R}$, and a L\'evy measure $\nu$, and $D = \{x: \abs{x}\le 1\}$.
  Then, it is direct to define the Lindbladian $\mathcal{L}$ via the equation
  \begin{equation*}
    \mathcal{L}\rbra*{\rho} =\sum_{j,k} \psi\rbra*{\lambda_j-\lambda_k} \rho_{jk} \ket{j}\bra{k}
  \end{equation*}
  by analyzing each term of $\psi\rbra{z}$
  in~\Cref{eqn:levy-form-character-in-proof-correspondence}.

  Noting that
  \begin{equation*}
    \sbra{H, \ket{j}\bra{k}} = \rbra{ \lambda_j-\lambda_k} \ket{j}\bra{k},
  \end{equation*}
  we know
  the term $\ii \gamma z$ in $\psi\rbra{z}$ corresponds to
  $\ii \gamma \sbra{H, \cdot}$ in the Lindbladian.
  By the same reasoning,
  the term $-\ii z \int_{\mathbb{R}} 1_D\rbra{s} \nu \rbra{\dd{s}} $ in $\psi\rbra{z}$
  corresponds to $-\ii \sbra{H, \cdot} \int_{\mathbb{R}} 1_D\rbra{s} \nu \rbra{\dd{s}}$ in the Lindbladian.

  Similarly, since
  \begin{equation*}
  H \ket{j}\bra{k} H - \frac{1}{2} \cbra*{\ket{j}\bra{k}, H^2 } = \rbra*{ \lambda_j  \lambda_k - \frac{\lambda_j^2+\lambda_k^2}{2}} \ket{j}\bra{k} = -\frac{\rbra{\lambda_j-\lambda_k}^2}{2}\ket{j}\bra{k},
  \end{equation*}
  the term $-\frac{\sigma^2}{2} z^2$ in $\psi\rbra{z}$ corresponds to
  a jump operator $H$ in the Lindbladian.

  For the term $\e^{\ii z s} - 1$ in $\psi\rbra{z}$, noting that
  \begin{equation*}
    \e^{\ii H s} \ket{j}\bra{k} \e^{-\ii H s}
    - \frac{1}{2} \cbra*{\ket{j}\bra{k}, \e^{-\ii H s} \e^{\ii H s}}
    = \rbra*{\e^{\ii \rbra{\lambda_j-\lambda_k}s} - 1} \ket{j}\bra{k},
  \end{equation*}
  we know it corresponds to a jump operator $ \e^{\ii H s}$ in the Lindbladian.

  To summarize, we could write the Lindbladian operator as
  \begin{equation*}
    \mathcal{L}\rbra*{\rho}
    = \ii \gamma \sbra*{H, \rho} + \sigma^2 H \rho H - \frac{\sigma^2}{2}  \cbra*{\rho,H^2}
      + \int_{\mathbb{R}} \rbra*{ \e^{\ii H s} \rho \e^{-\ii Hs} -1 - \ii s \sbra*{H,\rho}  1_{D}\rbra*{s} }\nu\rbra*{\dd{s}}.
    \end{equation*}
    Note that this matches the standard form of Lindbladians
    \begin{equation*}
      \mathcal{L}\rbra*{\rho} = - \ii \sbra*{H_{\text{eff}},\rho} + \sum_j L_j \rho L_j^{\dagger} - \frac{1}{2} \cbra*{L_j^{\dagger}L_j, \rho}
    \end{equation*}
    with the effective Hamiltonian $H_{\text{eff}} =  \rbra*{\int_{-1}^{1} s \; \nu \rbra*{\dd{s}}-\gamma} H$, and jump operators $L_0 = \sigma H$, $L_{s} =  \e^{\ii H s} \sqrt{\frac{\dd{\nu}}{\dd{t}}\rbra{s}}$ for $s\in \mathbb{R}$.
\end{proof}

In addition, we establish the following converse theorem, showing that if Hamiltonian twirling channels form a Markovian semigroup for all Hamiltonians, then the associated distributions are infinitely divisible.
\begin{theorem}
  Let $\{ \mu_s \}_{s\ge 0}$ denote a family of probability distributions. Suppose for any Hamiltonian $H$, the Hamiltonian twirling channels
  \begin{equation*}
    \Phi_{s} \rbra{\cdot}= \int_{\mathbb{R}} \e^{\ii Ht} \rbra{\cdot} \e^{-\ii Ht} \mu_{s}\rbra{\dd{t}}
  \end{equation*}
  satisfy $\Phi_s\circ \Phi_t = \Phi_{s+t}$ for any $s,t\ge 0$. 
  Then, $\{\mu_t\}_{t\ge 0}$ forms a semigroup under the convolution operation.
  Specifically, for any $t> 0$, $\mu_{t}$ is infinitely divisible.
\end{theorem}

\begin{proof}
  It suffices to show $\hat{\mu}_s(x) \hat{\mu}_t(x) = \hat{\mu}_{s+t}(x)$ for any $s,t\ge 0$ and $x\in \mathbb{R}$.
  Let $H = x \ket{0}\bra{0}$ and 
  \begin{equation*}
    \rho = \frac{1}{2}
    \begin{pmatrix}
      1 & 1 \\
      1 & 1 \\
    \end{pmatrix}
  \end{equation*}
  in basis $\ket{0}$ and $\ket{1}$.
  By~\Cref{prop:hamiltonian-twirling-is-schur}, we know 
  \begin{equation*}
    \Phi_t(\rho) = 
    \frac{1}{2}
    \begin{pmatrix}
      \hat{\mu}_t(0) & \hat{\mu}_t(x) \\
      \hat{\mu}_t(-x) & \hat{\mu}_t(0) \\
    \end{pmatrix},
  \end{equation*}
  and   \begin{equation*}
    \Phi_s(\Phi_t(\rho)) = 
    \frac{1}{2}
    \begin{pmatrix}
      \hat{\mu}_s(0) \hat{\mu}_t(0) & \hat{\mu}_s(x) \hat{\mu}_t(x) \\
      \hat{\mu}_s(-x) \hat{\mu}_t(-x) & \hat{\mu}_s(0)\hat{\mu}_t(0) \\
    \end{pmatrix}.
  \end{equation*}
  Similarly, we get 
    \begin{equation*}
    \Phi_{s+t}(\rho) = 
    \frac{1}{2}
    \begin{pmatrix}
      \hat{\mu}_{s+t}(0) & \hat{\mu}_{s+t}(x) \\
      \hat{\mu}_{s+t}(-x) & \hat{\mu}_{s+t}(0) \\
    \end{pmatrix}.
  \end{equation*}
  Comparing the entries, we get $\hat{\mu}_{s+t}(x)= \hat{\mu}_s(x) \hat{\mu}_t(x)$ as we desired.
\end{proof}

\section{Lindbladian Simulation by Hamiltonian Twirling Channels}

\subsection{Gaussian Twirling}
We first prove a lemma for bounding the error for truncating Gaussian distributions.
\begin{lemma}%
  \label{lem:truncated-Gaussian}
  For $t , S > 0$, let $\mu_{t,S}$ denote the distribution with density
  function
  \begin{equation*}
q_{t,S}(s) \propto
\begin{cases}
\e^{-s^2/(2t)}, & \text{if } s \in \interval{-S}{S}. \\
0, & \text{otherwise}.
\end{cases}
\end{equation*}
Let $\mu_t$ denote the normal distribution $\mathcal{N}\rbra{0, t}$.
Then, the total variation distance of $\mu_{t,S}$ and $\mu$ is
bounded by $\sqrt{\frac{2t}{\pi}} \frac{1}{S}\exp\rbra*{-\frac{S^2}{2t}}$.
Specifically, for $S = \sqrt{2t\log\rbra*{\frac{4}{\varepsilon}}}$, we have
\begin{equation*}
  \TV\rbra*{\mu_{t,S}, \mu_t} \le \varepsilon/2.
\end{equation*}
\end{lemma}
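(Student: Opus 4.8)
The plan is to reduce the total-variation bound to a tail-mass estimate for the Gaussian and then invoke Mill's inequality. First I would recall that for any distribution $\mathcal{D}_t$ and its conditioning $\mathcal{D}_{t,S}$ onto the event $A = \interval{-S}{S}$, the total variation distance equals the tail mass: writing $Z = \Pr_{X\sim\mathcal{D}_t}\rbra{X\in A} = 1 - \delta$ with $\delta \coloneqq \Pr_{X\sim\mathcal{D}_t}\rbra{\abs{X} > S}$, the truncated density is $q_{t,S}(s) = p_{\mathcal{D}_t}(s)/Z$ on $A$ and $0$ off $A$. A direct computation gives
\begin{equation*}
  \TV\rbra*{\mathcal{D}_{t,S}, \mathcal{D}_t}
  = \frac12 \int_{-\infty}^{\infty} \abs*{q_{t,S}(s) - p_{\mathcal{D}_t}(s)}\dd{s}
  = \frac12\rbra*{\int_A p_{\mathcal{D}_t}(s)\rbra*{\tfrac1Z - 1}\dd{s} + \int_{A^c} p_{\mathcal{D}_t}(s)\dd{s}}
  = \frac12\rbra*{(1-Z)\tfrac{1}{Z}\cdot Z \cdot \tfrac{1}{Z}\text{-correction}},
\end{equation*}
which I will simply write out cleanly as $\TV = \frac12\rbra*{\rbra{1/Z - 1}Z + \delta} = \frac12\rbra{\delta + \delta} = \delta$ after observing $(1/Z - 1)Z = 1 - Z = \delta$. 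So the TV distance is exactly $\delta = \Pr_{X\sim\mathcal{N}(0,t)}\rbra{\abs{X} > S}$.

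Next I would apply Mill's inequality (the lemma stated in the preliminaries) with $\sigma^2 = t$ and threshold $S$, which yields
\begin{equation*}
  \delta = \Pr\rbra*{\abs{X} > S} \le \sqrt{\frac{2}{\pi}}\,\frac{\sqrt{t}}{S}\exp\rbra*{-\frac{S^2}{2t}} = \sqrt{\frac{2t}{\pi}}\,\frac{1}{S}\exp\rbra*{-\frac{S^2}{2t}},
\end{equation*}
which is exactly the claimed general bound. For the specific choice $S = \sqrt{2t\log(4/\varepsilon)}$, I would substitute: $\exp\rbra{-S^2/(2t)} = \exp\rbra{-\log(4/\varepsilon)} = \varepsilon/4$, so $\delta \le \sqrt{2t/\pi}\cdot \frac{1}{\sqrt{2t\log(4/\varepsilon)}}\cdot \frac{\varepsilon}{4} = \frac{\varepsilon}{4\sqrt{\pi\log(4/\varepsilon)}}$. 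Since $\varepsilon \in (0,1)$ we have $\log(4/\varepsilon) > \log 4 > 1$, hence $\sqrt{\pi\log(4/\varepsilon)} > 1$, giving $\delta \le \varepsilon/4 \le \varepsilon/2$. (In fact one gets the sharper $\varepsilon/4$, but $\varepsilon/2$ suffices and matches the downstream budget.)

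The only mildly delicate point is the first step — verifying that $\TV$ between a distribution and its restriction-and-renormalization equals the discarded tail mass — but this is a standard and short computation, so I do not anticipate a genuine obstacle. One should just be careful that the proportionality constant in $q_{t,S}$ is $1/Z$ with $Z = \int_{-S}^{S} p_{\mathcal{D}_t}$, and that the two contributions to $\frac12\int\abs{\cdot}$ (the excess on $A$ and the deficit on $A^c$) are equal, each contributing $\delta/2$. Everything else is a direct application of Mill's inequality and elementary bounds on $\log(4/\varepsilon)$.
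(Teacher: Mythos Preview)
Your proposal is correct and follows essentially the same argument as the paper: both recognize that $q_{t,S}$ is the restriction-and-renormalization of $p_t$ to $[-S,S]$, compute $\TV(\mathcal{D}_{t,S},\mathcal{D}_t) = 1 - Z = \Pr(|X| > S)$ directly from the two pieces of $\tfrac12\int|p_t-q_{t,S}|$, bound this tail by Mill's inequality, and then substitute the specific $S$. The only cosmetic difference is in the final numerical bookkeeping (you bound $1/\sqrt{\pi\log(4/\varepsilon)} \le 1$ whereas the paper bounds it by $1/\sqrt{2}$), and your intermediate display with the ``$\tfrac{1}{Z}$-correction'' is garbled and should be cleaned up, but the surrounding sentences state the computation correctly.
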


\begin{proof}
  Set
  \begin{equation*}
    Z_{t,S} = \frac{1}{\sqrt{2\pi t}}\int_{-S}^S \e^{-s^2/(2t)} \text{d} s.
  \end{equation*}
  By Mill's inequality, we have
  $Z_{t,S}\ge 1 - \sqrt{\frac{2t}{\pi}} \frac{1}{S}\exp\rbra*{-\frac{S^2}{2t}}$.
  Denote $p_{t}(s)$ be the probability density function of $\mu_t$, by
  definition we know
  \begin{equation*}
    q_{t, S} = \frac{\e^{-s^2/(2t)} \mathbf{1}_{\interval{-S}{S}}}{Z_{t,S}\sqrt{2\pi t}} = \frac{p_{t}(s) \mathbf{1}_{\interval{-S}{S}}}{Z_{t,S}}.
  \end{equation*}
  Therefore,
  \begin{equation*}
    \begin{aligned}
      \TV\rbra*{\mu_t, \mu_{t,S}}
      &= \frac{1}{2} \int \abs{p_t(s)-q_{t,S}(s)} \textup{d} s \\
      &= \frac{1}{2}\rbra*{ \int_{-S}^{S} \abs{p_t(s)-q_{t,S}(s)} \textup{d} s + \int_{\abs{s}\ge S} \abs{p_t(s)-q_{t,S}(s)} \textup{d} s  } \\
      &= 1- Z \le \sqrt{\frac{2}{\pi}} \frac{\sqrt{t}}{S}\exp\rbra*{-\frac{S^2}{2t}}.
    \end{aligned}
  \end{equation*}
  Taking $S = \sqrt{2t\log\rbra*{\frac{4}{\varepsilon}}}$, we know
  \begin{equation*}
\sqrt{\frac{2}{\pi}} \frac{\sqrt{t}}{S}\exp\rbra*{-\frac{S^2}{2t}} \le \frac{1}{\sqrt{2}} \exp\rbra*{-\log\rbra*{\frac{4}{\varepsilon}}} \le \frac{\varepsilon}{2}.
  \end{equation*}
\end{proof}

We are now able to prove the following theorem.
\begin{theorem}%
  \label{thm:algorithm-Lindbladian-complexity}
  Simulating Lindbladians with a single Hermitian jump operator for evolution time $t$ to within diamond
  distance error $\varepsilon$ can be done 
  in $O(\sqrt{t\log (1/\varepsilon)})$
  Hamiltonian simulation time \emph{without any extra ancilla}.
\end{theorem}
\begin{proof}
  Let $\mathcal{E}$ denote the channel given by the algorithm.
  Using the notation in~\Cref{lem:truncated-Gaussian}, we can write
  \begin{equation*}
    \mathcal{E}\rbra*{\rho} = \int_{\abs{s}\le S} q_{t,S}\rbra*{s} \exp\rbra*{-\ii H s} \rho \exp\rbra*{\ii H s}.
  \end{equation*}
  Let $\mathcal{E}_{s}$ denote the quantum channel of applying the unitary
  $\exp\rbra*{-\ii H s}$, we get
  \begin{equation*}
    \mathcal{E} = \int_{\abs{s}\le S} q_{t,S}\rbra*{s} \mathcal{E}_s \textup{d} s.
  \end{equation*}
  By the Hamiltonian twirling property, we know
  \begin{equation*}
    \e^{t \mathcal{L}} = \int_{-\infty}^{\infty} p_{t}\rbra*{s} \mathcal{E}_s \textup{d} s.
  \end{equation*}
  Therefore, we have
  \begin{equation*}
    \begin{aligned}
      \norm*{\mathcal{E}-\e^{t \mathcal{L}}}_{\diamond}
      &= \norm*{  \int_{\abs{s}\le S} q_{t,S}\rbra*{s} \mathcal{E}_s \textup{d} s - \int_{-\infty}^{\infty} p_{t}\rbra*{s} \mathcal{E}_s \textup{d} s}_{\diamond} \\
      &= \norm*{  \int_{-\infty}^{\infty}\rbra*{ q_{t,S}\rbra*{s}- p_{t}\rbra*{s}} \mathcal{E}_s \textup{d} s}_{\diamond} \\
      &\le  \int_{-\infty}^{\infty} \abs*{q_{t,S}\rbra*{s}- p_{t}\rbra*{s} }\norm*{\mathcal{E}_s}_{\diamond} \textup{d} s \le  2 \TV\rbra*{\mu_{t}, \mu_{t,S}}\le \varepsilon
    \end{aligned}
  \end{equation*}
  by~\Cref{lem:truncated-Gaussian}.
\end{proof}

\subsection{Stable Twirling}%
\label{sec:stable-twirling-upper-bound}

\begin{proposition}
  Consider a Lindbladian $\mathcal{L}$ of the form
  \begin{equation*}
    \mathcal{L}\rbra{\rho} = C_{\alpha,c} \int_{\abs{s}>0} \dd{s} \frac{1}{\abs{s}^{1+\alpha}} \rbra*{\mathrm{e}^{-\mathrm{i}Hs}\rho \mathrm{e}^{\mathrm{i}Hs} - \rho},
  \end{equation*}
  where $C_{\alpha,c}$ is the normalization constant satisfying
  \begin{equation*}
    c \abs{x}^{\alpha} = C_{\alpha,c} \int_{\abs{s}>0}\dd{s} \rbra*{\cos \rbra*{xs}-1} \frac{1}{\abs{s}^{1+\alpha}},
  \end{equation*}
  and $H = \sum_j \lambda_j \ket{j}\bra{j}$ is a Hamiltonian.
  Then, we have
  \begin{equation*}
    \mathrm{e}^{t \mathcal{L}} \rbra*{\rho} = \sum_{jk} \mathrm{e}^{-t c \abs{\lambda_j-\lambda_k}^{\alpha}} \rho_{jk}\ket{j}\bra{k},
  \end{equation*}
  where $\rho_{jk} = \bra{j}\rho\ket{k}$.
\end{proposition}

\begin{theorem}[Fast Forwarding Lindbladian Simulation Upper Bound]%
  \label{thm:lindbladian-simulation-upper-bound-stable}
  Let $1 < \alpha \le 2$. For a Lindbladian $\mathcal{L}$ of the form
  \begin{equation*}
    \mathcal{L}\rbra{\rho} = C_{\alpha,c} \int_{\abs{s}>0} \dd{s} \frac{1}{\abs{s}^{1+\alpha}} \rbra*{\mathrm{e}^{-\mathrm{i}Hs}\rho \mathrm{e}^{\mathrm{i}Hs} - \rho},
  \end{equation*}
  where $C_{\alpha,c}$ is a normalization constant
  and $H = \sum_j \lambda_j \ket{j}\bra{j}$ is a Hamiltonian.
  or, equivalently
  \begin{equation*}
    \mathrm{e}^{t \mathcal{L}} \rbra*{\rho} = \sum_{jk} \mathrm{e}^{-t c \abs{\lambda_j-\lambda_k}^{\alpha}} \rho_{jk}\ket{j}\bra{k},
  \end{equation*}
  simulating the evolution of $\mathrm{e}^{t \mathcal{L}}$ can be done in $O\rbra{t^{1/\alpha}}$ Hamiltonian simulation time.
\end{theorem}

\begin{proof}
  Let $\mu_{t}$ denote the family of distributions $\StableS{\alpha}{ct}$.
  Then, the Hamiltonian twirling channel $\Phi_{\mu_t, H}$ implements $\mathrm{e}^{t \mathcal{L}}$.
  By~\Cref{thm:general-absolute-moment}, the expected simulation time is $O\rbra{t^{1/\alpha}}$.
\end{proof}

\subsection{Poisson and Compound Poisson Twirling}

In this part, we study the Hamiltonian twirling channel with a compound Poisson
distribution, which was also considered in~\cite{Vac05}.
Since a Poisson distribution is always compound Poisson, this also
generalizes the discussions in~\cite[Equation 6.26]{Kho91}.

\begin{theorem}[Compound Poisson Twirling]%
  \label{thm:compound-poisson-twirling}
  Let $\mu$ be a fixed distribution, and $\{\mu_{t}\}$ denote a set of compound Poisson distribution with its
  characteristic function
  $\hat{\mu}_{t} = \exp\rbra{t\rbra{\hat{\mu}-1}}$, where
  $\hat{\mu}$ is the characteristic function of
  $\mu$.
  Then, for a Hamiltonian $H$, the Hamiltonian twirling channel
  $\Phi_{H,\mu_t}$ can be written as $\e^{t \mathcal{L}}$ for
  the Lindbladian $\mathcal{L}$ satisfying
  \begin{equation*}
    \mathcal{L} \rbra{\rho} = \E_{s\sim \mu} \sbra{\e^{\ii H s}\rho \e^{-\ii H s}} - \rho.
  \end{equation*}
\end{theorem}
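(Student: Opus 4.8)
The plan is to reduce the statement entirely to the Schur‑channel calculus already set up in \Cref{prop:hamiltonian-twirling-is-schur}. Fix an eigendecomposition $H=\sum_j\lambda_j\ket{j}\bra{j}$ and write $\psi_{\mathcal D}$ in the convention of that proposition, i.e.\ $\psi_{\mathcal D}(\omega)=\E_{s\sim\mathcal D}[\e^{-\ii\omega s}]$. First I would record two preliminary observations. (i) The family $\{\mathcal D_t\}_{t\ge0}$ with $\phi_{\mathcal D_t}=\exp(t(\psi_{\mathcal D}-1))$ is exactly the compound Poisson family of \Cref{def:infinite-divisible}'s running example (rate $\lambda=t$, base distribution $\sigma=\mathcal D$), so each $\mathcal D_t$ is a genuine probability distribution, $\mathcal D_0=\delta_0$, and $\phi_{\mathcal D_s}\phi_{\mathcal D_t}=\phi_{\mathcal D_{s+t}}$, hence $\mathcal D_s\star\mathcal D_t=\mathcal D_{s+t}$. (ii) Define $\mathcal L(\rho):=\Phi_{H,\mathcal D}(\rho)-\rho=\E_{s\sim\mathcal D}[\e^{-\ii Hs}\rho\,\e^{\ii Hs}]-\rho$, which is the claimed generator; since $(\e^{-\ii Hs})^\dagger\e^{-\ii Hs}=I$ we have $\e^{-\ii Hs}\rho\,\e^{\ii Hs}-\rho=\e^{-\ii Hs}\rho(\e^{-\ii Hs})^\dagger-\tfrac12\{(\e^{-\ii Hs})^\dagger\e^{-\ii Hs},\rho\}$, so $\mathcal L$ is in GKSL form with a (continuum) family of jump operators $\e^{-\ii Hs}$, all commuting with $H$, and no Hamiltonian part — consistent with \Cref{thm:levy-lindbladian-correspondence-hamiltonian-twirling} since compound Poisson is infinitely divisible.

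The core of the argument is then a one‑line exponentiation in the Schur picture. By \Cref{prop:hamiltonian-twirling-is-schur}, $\Phi_{H,\mathcal D}$ is the Schur map whose matrix has $(j,k)$ entry $\psi_{\mathcal D}(\lambda_j-\lambda_k)$; the identity is the Schur map with the all‑ones matrix; hence, by \Cref{def:Schur-channel} and linearity, $\mathcal L$ is the Schur map with matrix $M$, $M_{jk}=\psi_{\mathcal D}(\lambda_j-\lambda_k)-1$. Schur maps with a fixed matrix act diagonally on the operator basis $\{\ket{j}\bra{k}\}$, so they all commute and compose by entrywise (Hadamard) multiplication of their matrices; consequently $\mathcal L^n$ is the Schur map with matrix $M^{\odot n}$ and $\e^{t\mathcal L}$ is the Schur map whose $(j,k)$ entry is $\e^{tM_{jk}}=\exp\!\big(t(\psi_{\mathcal D}(\lambda_j-\lambda_k)-1)\big)=\phi_{\mathcal D_t}(\lambda_j-\lambda_k)$, using the hypothesis on $\phi_{\mathcal D_t}$. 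Applying \Cref{prop:hamiltonian-twirling-is-schur} once more, now to the distribution $\mathcal D_t$, the Schur map with matrix $\big(\phi_{\mathcal D_t}(\lambda_j-\lambda_k)\big)_{jk}$ is precisely $\Phi_{H,\mathcal D_t}$. Hence $\e^{t\mathcal L}=\Phi_{H,\mathcal D_t}$, which is the claim.

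An equivalent route, if one prefers to stay at the channel level, is the semigroup/generator argument: for independent $S_1\sim\mu_1$, $S_2\sim\mu_2$ one has $\e^{-\ii H(S_1+S_2)}\rho\,\e^{\ii H(S_1+S_2)}=\e^{-\ii HS_1}\big(\e^{-\ii HS_2}\rho\,\e^{\ii HS_2}\big)\e^{\ii HS_1}$, so $\Phi_{H,\mu_1\star\mu_2}=\Phi_{H,\mu_1}\circ\Phi_{H,\mu_2}$; with observation (i) this makes $\{\Phi_{H,\mathcal D_t}\}_{t\ge0}$ a CPTP semigroup with $\Phi_{H,\mathcal D_0}=\mathrm{Id}$, whose generator is $\lim_{t\to0^+}t^{-1}(\Phi_{H,\mathcal D_t}-\mathrm{Id})$, which in the Schur picture is the Schur map with matrix $\lim_{t\to0^+}t^{-1}(\phi_{\mathcal D_t}(\omega)-1)=\psi_{\mathcal D}(\omega)-1$, i.e.\ exactly $\mathcal L$; finite‑dimensional uniqueness of the semigroup generated by $\mathcal L$ then gives $\Phi_{H,\mathcal D_t}=\e^{t\mathcal L}$. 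The only point that needs a little care is the continuum of jump operators when $\mathcal D$ has a density: one should either argue directly that $\Phi_{H,\mathcal D}-\mathrm{Id}$ is a norm‑convergent mixture of elementary GKSL generators (so itself GKSL), or simply quote \Cref{thm:levy-lindbladian-correspondence-hamiltonian-twirling}. I expect this bookkeeping, rather than any genuine difficulty, to be the main thing to state cleanly; the rest is routine linear algebra on the $2^{2n}$‑dimensional operator space.
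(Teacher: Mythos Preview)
Your proposal is correct and follows essentially the same approach as the paper: both diagonalize $H$, observe that $\mathcal{L}$ acts as the Schur map with $(j,k)$ entry $\psi_{\mathcal D}(\lambda_j-\lambda_k)-1$, exponentiate entrywise to obtain $\phi_{\mathcal D_t}(\lambda_j-\lambda_k)$, and then invoke \Cref{prop:hamiltonian-twirling-is-schur} to identify the result with $\Phi_{H,\mathcal D_t}$. Your surrounding remarks (the GKSL form of $\mathcal L$, the semigroup route) are sound but go beyond what the paper records; the core computation is the same.
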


\begin{proof}
  Consider the eigendecomposition of $H$ as $H = \sum_j \lambda_j \ket{j}\bra{j}$.
  Let $\rho_{jk} = \bra{j} \rho \ket{k}$, and we have $\rho = \sum_{jk} \rho_{jk}\ket{j}\bra{k}$.
  Then, we can compute directly that
  \begin{equation*}
    \begin{aligned}
      \mathcal{L}\rbra{\rho}
      &= \int_{\mathbb{R}} \dd{s}\; p_{\mu}\rbra{s}  \sum_{jk}\e^{\mathrm{i}\rbra{\lambda_j-\lambda_k}s} \rho_{jk} \ket{j}\bra{k} - \sum_{jk} \rho_{jk} \ket{j}\bra{k} \\
      &= \sum_{jk} \rbra{\hat{\mu}\rbra{\lambda_j-\lambda_k} -1} \rho_{jk} \ket{j}\bra{k}.
    \end{aligned}
  \end{equation*}
  Therefore, we know
  \begin{equation}%
    \label{eqn:compound-poisson-dynamics}
    \e^{t \mathcal{L}} \rbra{\rho} =  \sum_{jk} \exp\rbra{t\rbra{\hat{\mu}\rbra{\lambda_j-\lambda_k} -1}} \rho_{jk} \ket{j}\bra{k},
  \end{equation}
  which is the same as $\Phi_{H, \mu_t} \rbra{\rho}$ by~\Cref{prop:hamiltonian-twirling-is-schur}.
\end{proof}

The implications of the above theorem are discussed as follows.

Let $X_j \sim \mu$ be independent and identically distributed random
variables, and $N\sim \Poisson\rbra{t}$.
Then, by~{\cite[Chapter XII.2]{Fel76}}, we know $S_N \coloneqq \sum_{j=1}^N X_j$
follows the distribution $\mu_t$.

First, when $\mu$ is the Dirac measure $\delta_{s_0}$, the Lindbladian is
\begin{equation*}
  \mathcal{L} \rbra{\rho} = \e^{\ii H s_0} \rho \e^{-\ii H s_0} - \rho.
\end{equation*}
This was also considered in~{\cite[Equation 6.26]{Kho91}}.
One can simulate this Lindbladian dynamics $\e^{t \mathcal{L}}$ by
implementing $\Phi_{\mu_t}$ in Hamiltonian simulation time $O(\abs{s_{0}t})$ in expectation
without ancilla.

Second, when $\mu = \sum_j p_j \delta_{s_j}$, the Lindbladian is
\begin{equation*}
  \mathcal{L} \rbra{\rho} = \sum_j p_j \e^{\ii H s_j} \rho \e^{-\ii H s_j} - \rho.
\end{equation*}
Suppose all $s_j\ge 0$, one can simulate this dynamics in Hamiltonian simulation
time $O(t\sum_j p_j s_j)$ in expectation.

In addition, \Cref{eqn:compound-poisson-dynamics} suggests that the non-diagonal
term of $\rho$ could decay not only exponentially with respect to $t$, but also
potentially more than exponentially with respect to the energy gap
$\lambda_j-\lambda_k$.
A typical case is when $\mu = \mathcal{N}\rbra{0,1}$ with characteristic
function $\hat{\mu}(z) = \exp\rbra{-\frac{z^2}{2}}$.
In this case, the decay is \emph{double exponential} with respect to the energy
gap.

\section{Lower Bounds on Query Complexity}%
\label{sec:lower-bound-evolution}

In this section, we prove lower bounds on the query complexity of simulating
the above Lindbladian dynamics.

For the query complexity, we need to specify the access model.
For this, we recall the definition of block-encoding, which is widely used
in the area of quantum algorithms.
\begin{definition}[Block-Encoding,~\cite{LC19,CGJ19,GSLW19}]\label{def:block-encoding}
  Let $A$ be an $s$-qubit linear operator. An $(s+a)$-qubit unitary $U$
  is called an $(\alpha, a, \varepsilon)$-block-encoding of $A$, if
  \begin{equation*}
    \norm*{A-\alpha \rbra*{\bra{0}^{\otimes a}\otimes I} U \rbra*{\ket{0}^{\otimes a}\otimes I}}\le \varepsilon.
 \end{equation*}
\end{definition}
If $\varepsilon = 0$ and $a$ is not necessary to mention in the context, we will
simply call it an $\alpha$-block-encoding.
If $\alpha = 1$, we will just call it a block-encoding.

The core technique to prove lower bounds is the sample-to-query lifting proposed in~\cite{WZ25} and recently improved in~\cite{CWZ25} via the samplizer
method~\cite{WZ25,WZ25b} and results in~\cite{TWZ25}.
It is a useful technique that can ``lift'' the sample complexity lower bound to
query complexity lower bound.
To be more exact, let $\rho_1$ and $\rho_2$ be two density operators on a same Hilbert
space.
Following~\cite{WZ25, CWZ25}, we will use $\SampCom\rbra{\mathsf{P}}$ to denote the
sample complexity of solving the quantum state testing problem of the property
$\mathsf{P}$ with independent and identical sample access to the states.
Specifically, $\SampCom\rbra{\Dis_{\rho_1,\rho_2}}$ is used to denote the sample
complexity to distinguish $\rho_1$ from $\rho_2$.
It is clear that $\SampCom\rbra{\Dis_{\rho_1,\rho_2}} = \Omega\rbra{1/\gamma}$ (see~{\cite[Lemma
  2.3]{WZ25}}), where $\gamma = 1 - \Fid\rbra{\rho_1, \rho_2}$ is the infidelity
between $\rho_1$ and $\rho_2$.
Similarly, we use $\QueryD\rbra{\mathsf{P}}$ to denote the query complexity of
solving the quantum testing problem of the property $\mathsf{P}$ with
$1$-block-encoded query access (originally $2$-block-encoding in~\cite{WZ25} and
improved in~\cite{CWZ25}).
Specifically, $\QueryD\rbra{\Dis_{\rho_1,\rho_2}}$ is used to denote the sample
complexity to distinguish $\rho_1$ from $\rho_2$, given a block-encoding access to $\rho$
where $\rho$ is either $\rho_1$ or $\rho_2$.

As shown in~\cite{WZ25,CWZ25}, $\QueryD\rbra{\Dis_{\rho_1,\rho_2}}$ can be characterized
by the infidelity between the density operators.
\begin{theorem}[Query Lower Bound from Lifting on Fidelity,~{\cite[Theorem 1.1]{CWZ25}} and~{\cite[Corollary
    1.3]{WZ25}}]%
  \label{thm:query-lower-bound-lifting-fidelity}
  Let $\rho$ and $\sigma$ be two quantum states on a same Hilbert space.
  Then
  $\QueryD\rbra{\Dis_{\rho_1,\rho_2}} =\Omega\rbra{1/\sqrt{\gamma}}$,
  where $\gamma = 1 - \Fid\rbra{\rho_1, \rho_2}$ is the infidelity between
  $\rho_1$ and $\rho_2$.
\end{theorem}

\subsection{Lower Bounds for Stable Distribution Twirling}

We first prove lower bounds for our stable distribution twirling algorithms. Since a Gaussian distribution is a stable distribution with $\alpha = 2$,
the theorem also include lower bounds for Gaussian twirling algorithms.
\begin{theorem}[Lindbladian Simulation Lower Bound for Stable Distribution Twirling]%
  \label{thm:lindbladian-simulation-lower-bound-stable}
  Let $1< \alpha\le 2$. For a Lindbladian $\mathcal{L}$ of the form
  \begin{equation}\label{eqn:Lindblad-stable-twirling}
    \mathcal{L}\rbra{\rho} = C_{\alpha} \int_{\abs{s}>0} \dd{s} \frac{1}{\abs{s}^{1+\alpha}} \rbra*{\mathrm{e}^{-\mathrm{i}Hs}\rho \mathrm{e}^{\mathrm{i}Hs} - \rho},
  \end{equation}
  where $H = \sum_j \lambda_j \ket{j}\bra{j}$ is a Hamiltonian in eigenbasis,  and $C_{\alpha}$ is a normalization constant such that
  \begin{equation*}
    \mathrm{e}^{t \mathcal{L}} \rbra*{\rho} = \sum_{jk} \mathrm{e}^{-t \abs{\lambda_j-\lambda_k}^{\alpha}} \rho_{jk}\ket{j}\bra{k},
  \end{equation*}
  Then, simulating the evolution of $\mathrm{e}^{t \mathcal{L}}$ requires $\Omega\rbra{t^{1/\alpha}}$ queries to the block-encoding of $H$.
\end{theorem}

\begin{proof}
  Let $c\ge 2$ be some number to be determined later.
  In the following, let $\sigma_1 = \frac{1}{2}I$ and $\sigma_2 = \rbra{\frac{1}{2}+\frac{1}{c}} \ket{0}\bra{0} + \rbra{\frac{1}{2}-\frac{1}{c}}\ket{1}\bra{1}$.
  Note that, the infidelity between $\sigma_1$ and $\sigma_2$ can be computed as
  \begin{equation*}
    \gamma \coloneqq 1- \Fid\rbra{\sigma_1, \sigma_2} = 1 - \frac{1}{2} \rbra*{\sqrt{1+\frac{2}{c}} + \sqrt{1-\frac{2}{c}} } \le \frac{4}{c^2}.
  \end{equation*}
  Therefore, by~\Cref{thm:query-lower-bound-lifting-fidelity}, we know that the query complexity lower bound for
  discriminating these states is $\Omega\rbra{c}$.

  We now show that, given query access to a block-encoding of $\sigma$ where $\sigma = \sigma_1$ or $\sigma_2$,
  performing Lindbladian simulation with the jump operator being $\sigma$
  for time $t = 9c^{2}$ for constant rounds suffices to distinguish the cases with probability at least $2/3$.
  The reduction is as follows.

\begin{algorithm}[H]
  \caption{Reduction in Stable Distribution Case: Distinguish between $\sigma_1$ and $\sigma_2$ via Lindbladian Simulation}%
  \label{alg:reduction-lifting-stable}
\begin{algorithmic}[1]
\Require Query access to a block-encoding of $H$, which is either $H_1=\sigma_1$ or $H_2=\sigma_2$.
\Ensure Decide the input being either $\sigma_1$ or $\sigma_2$.
\State{}$\rho_0 \gets \ket{+}\bra{+}$.
\State{}$t \gets 9c^{\alpha}$.
\State{}$\texttt{flag} \gets \texttt{false}$.
\State{}Let $\mathcal{L}$ denote the Lindbladian $\mathcal{L}$ in~\Cref{eqn:Lindblad-stable-twirling}.
\For{$i = 1$ to $100$}
    \State Prepare $\rho_0$.
    \State $\rho_t \gets e^{t\mathcal{L}}(\rho_0)$.
    \State Measure $\rho_t$ in $\{\ket{+},\ket{-}\}$ basis.
    \If{Measurement outcome is $\ket{-}$}
      \State $\texttt{flag} \gets \texttt{true}$.
    \EndIf
\EndFor

\If{$\texttt{flag} = \texttt{false}$}
    \State{} \textbf{Return} $\sigma_1$.
\Else{}
    \State{} \textbf{Return} $\sigma_2$.
\EndIf{}

\end{algorithmic}
\end{algorithm}

For the first case where $H = \sigma_1$, the channel is the identity channel,
and one can always get the outcome $\ket{+}$.
Therefore, in this case, the algorithm always succeeds.

For the second case where the Lindbladian has the single jump operator
$\sigma_{2}$, using~\Cref{prop:hamiltonian-twirling-is-schur}, we know
after the evolution, the state is just
\begin{equation*}
  \rho_{t} =
  \frac{1}{2}
  \begin{pmatrix}
    1 & \mathrm{e}^{-2t/c^{\alpha}} \\
    \mathrm{e}^{-2t/c^{\alpha}} & 1 \\
  \end{pmatrix}
  =
    \frac{1}{2}
  \begin{pmatrix}
    1 & \mathrm{e}^{-18} \\
    \mathrm{e}^{-18} & 1 \\
  \end{pmatrix}
\end{equation*}
Therefore, the probability of getting the measurement result $\ket{-}$ is just
$\tr(\rho_t\ket{-}\bra{-}) = \frac{1-e^{-18}}{2}$.
For repeating $100$ times, the result of getting at least one $\ket{-}$ is
$1-(\frac{1+e^{-18}}{2})^{100} \ge \frac{2}{3}$.
Therefore,~\Cref{alg:reduction-lifting-stable} can successfully distinguish these two
cases, meaning that one needs $\Omega\rbra{c} = \Omega\rbra{t^{1/\alpha}}$ queries to the
block-encoding of $H$.
Therefore, the theorem holds as we desired.
\end{proof}

\subsection{A Lower Bound for Poisson-Twirling Channels}
\begin{theorem}[Query Lower Bound for Lindbladian Evolution with Mixed Unitary Jump Operators]%
  \label{thm:lower-bound-lindbladian-Poisson}
  Let $\mathcal{L}$ denote the following Lindbladian operator 
  \begin{equation}
  \mathcal{L} \rbra{\rho} = \sum_{j=1}^N p_j \e^{\ii H s_j} \rho \e^{-\ii H s_j} - \rho,
  \end{equation}
  where $H$ is a Hamiltonian with $\norm{H}\le 1$, $p_j\ge 0$ and $\sum_j p_j =1$, and $s_j$ are fixed constants.
  Then, simulating the evolution of $\mathcal{L}$ for time $t$ requires $\Omega\rbra{t}$
  queries to the block-encoding of $H$, even for the case $N=1$ where 
  \begin{equation}\label{eqn:Lindblad-single-poisson}
  \mathcal{L} \rbra{\rho} = \e^{\ii H} \rho \e^{-\ii H} - \rho.
  \end{equation}
\end{theorem}

\begin{proof}
  Let $c\ge 2$ be some number to be determined later.
  In the following, let $\sigma_1 = \frac{1}{2}I$ and $\sigma_2 = \rbra{\frac{1}{2}+\frac{1}{c}} \ket{0}\bra{0} + \rbra{\frac{1}{2}-\frac{1}{c}}\ket{1}\bra{1}$.
  Note that, the infidelity between $\sigma_1$ and $\sigma_2$ can be computed as
  \begin{equation*}
    \gamma \coloneqq 1- \Fid\rbra{\rho_1, \rho_2} = 1 - \frac{1}{2} \rbra*{\sqrt{1+\frac{2}{c}} + \sqrt{1-\frac{2}{c}} } \le \frac{4}{c^2}.
  \end{equation*}
  Therefore, by~\Cref{thm:query-lower-bound-lifting-fidelity}, we know that the query complexity lower bound for
  discriminating these states is $\Omega\rbra{c}$.

  We now show that, given query access to a block-encoding of $\sigma$ where $\sigma = \sigma_1$ or $\sigma_2$,
  performing a Lindbladian simulation
  for time $t = \frac{\pi c}{4}$ for constant rounds suffices to distinguish the cases with probability at least $2/3$.
  The reduction is as follows.

\begin{algorithm}[H]
  \caption{Reduction in Poisson Case: Distinguish between $\sigma_1$ and $\sigma_2$ via Lindbladian Simulation}%
  \label{alg:reduction-lifting-Poisson}
\begin{algorithmic}[1]
\Require Query access to a block-encoding of $H$, which is either $H_1=\sigma_1$ or $H_2=\sigma_2$.
\Ensure Decide the input being either $\sigma_1$ or $\sigma_2$.
\State{}$\rho_0 \gets \ket{+}\bra{+}$.
\State{}$t \gets \frac{\pi c}{4}$.
\State{}$\texttt{flag} \gets \texttt{false}$.
\State{}Let $\mathcal{L}$ denote the Lindbladian with a single jump operator $\mathrm{e}^{-\mathrm{i}H}$.
\For{$i = 1$ to $100$}
    \State Prepare $\rho_0$.
    \State $\rho_t \gets e^{t\mathcal{L}}(\rho_0)$.
    \State Measure $\rho_t$ in $\{\ket{+},\ket{-}\}$ basis.
    \If{Measurement outcome is $\ket{-}$}
      \State $\texttt{flag} \gets \texttt{true}$.
    \EndIf
\EndFor

\If{$\texttt{flag} = \texttt{false}$}
    \State{} \textbf{Return} $\sigma_1$.
\Else{}
    \State{} \textbf{Return} $\sigma_2$.
\EndIf{}

\end{algorithmic}
\end{algorithm}

  For the first case where the Lindbladian has the single jump operator
$\sigma_1$, the channel is the identity channel, and one can always get
the outcome $\ket{+}$.
Therefore, when the input is $\rho_{1}$, the algorithm always succeeds.

For the second case where the Lindbladian has the single jump operator
$\sigma_{2}$, using~\Cref{prop:hamiltonian-twirling-is-schur}, we know
after the evolution, the state is just
\begin{equation*}
  \rho_{t} =
  \frac{1}{2}
  \begin{pmatrix}
    1 &  f\rbra{c} \\
   \bar{f}\rbra{c} & 1 \\
  \end{pmatrix}
\end{equation*}
where $f(c) = \mathrm{e}^{c\pi/4\rbra{\cos\rbra*{2/c}-1}} \mathrm{e}^{\mathrm{i} \pi c/4 \sin \rbra*{2/c}}$.
\end{proof}
 Then, after measuring in $\ket{+}, \ket{-}$ basis, the result of getting $\ket{-}$ is just 
 \begin{equation*}
   p_{-}\rbra{c} =  \frac{1}{2} \rbra*{1-\Re{f(c)}} = \frac{1}{2}\rbra{1-\mathrm{e}^{\pi c/4\rbra*{\cos\rbra*{2/c}-1}}  \cos \rbra*{\pi c/4 \sin \rbra*{2/c}}}.
 \end{equation*}
 Note that 
 \begin{equation*}
    \lim_{c\to +\infty} p_{-}\rbra{c} = \frac{1-\cos\rbra{\pi/2}}{2} = \frac{1}{2}.
 \end{equation*}
 Thus, for sufficiently large $t$, we can assume $p_{-}\rbra{t} \ge 0.49$.
 For repeating $100$ times, the result of getting at least one $\ket{-}$ is no less than $1-0.51^{100}\ge 2/3$.
 Therefore,~\Cref{alg:reduction-lifting-Poisson} can successfully distinguish these two
cases, meaning that one needs $\Omega\rbra{c} = \Omega\rbra{t}$ queries to the
block-encoding of $H$.
Therefore, the theorem holds as we desired.

\end{document}